\newlist{enum-hypothesis}{enumerate}{1}
\setlist[enum-hypothesis]{label=(Hyp.~\arabic*),itemsep=0pt, parsep=0pt, labelwidth=5em, leftmargin=5em}
\setlist[enumerate,1]{label=\arabic*., ref=\arabic*, topsep=1pt, itemsep=2pt, parsep=0pt, leftmargin=1.5em, itemindent=0em, labelsep=0.2em, labelwidth=1.3em}
\setlist[enumerate,2]{label=\alph*., ref=\theenumi.\alph*, topsep=1pt, itemsep=2pt, parsep=0pt, leftmargin=0.5em, itemindent=0em, labelsep=0.2em, labelwidth=1.5em}
\setlist[enumerate,3]{label=\roman*., ref=\theenumii.\roman*, topsep=1pt, itemsep=2pt, parsep=0pt, leftmargin=0.5em, itemindent=0em, labelsep=0.2em, labelwidth=1.2em}
\setlist[itemize,1]{topsep=1pt, itemsep=2pt, parsep=0pt}
\newcolumntype{R}{>{\raggedleft\arraybackslash$}p{1.5em}<{$}} 
\newtheorem{theorem}{Theorem}[section]
\newtheorem{proposition}[theorem]{Proposition}
\newtheorem{lemma}[theorem]{Lemma}
\newtheorem{assumption}[theorem]{Assumption}
\newtheorem{definition}[theorem]{Definition}
\theoremstyle{plain}
\theoremstyle{break}
\theoremstyle{nonumberplain}
\newtheorem{proof}{Proof}
\DeclareSymbolFont{largesymbols}{OMX}{cmex}{m}{n}
\newcommand{\bbA}{\mathbb{A}}
\newcommand{\bbC}{\mathbb{C}}
\newcommand{\bbR}{\mathbb{R}}
\newcommand{\bbbone}{{\text{\usefont{U}{bbold}{m}{n}\char49}}}
\newcommand{\bbbzero}{{\text{\usefont{U}{bbold}{m}{n}\char48}}}
\newcommand{\calA}{\mathcal{A}}
\newcommand{\calD}{\mathcal{D}}
\newcommand{\calE}{\mathcal{E}}
\newcommand{\calF}{\mathcal{F}}
\newcommand{\calG}{\mathcal{G}}
\newcommand{\calI}{\mathcal{I}}
\newcommand{\calO}{\mathcal{O}}
\newcommand{\calZ}{\mathcal{Z}}
\newcommand{\kg}{\mathfrak{g}}
\newcommand{\ksu}{\mathfrak{su}}
\newcommand{\ku}{\mathfrak{u}}
\newcommand{\kv}{\mathfrak{v}}
\newcommand{\bD}{\mathbf{D}}
\newcommand{\bE}{\mathbf{E}}
\newcommand{\bM}{\mathbf{M}}
\newcommand\hu{\hat{u}}
\newcommand{\Tphya}{\tilde{\phya}}
\newcommand{\Talpha}{\tilde{\alpha}}
\newcommand{\Tsigma}{\tilde{\sigma}}
\newcommand{\Trho}{\tilde{\rho}}
\newcommand{\Txi}{\tilde{\xi}}
\newcommand{\Tgamma}{\tilde{\gamma}}
\newcommand{\phyA}{\mathsf{A}}
\newcommand{\phya}{\mathsf{a}}
\newcommand{\hphyA}{\widehat{\mathsf{A}}}
\newcommand{\phyF}{\mathsf{F}}
\newcommand{\phyf}{\mathsf{f}}
\newcommand{\phyD}{\mathsf{D}}
\newcommand{\smallpmatrix}[1]{\left( \begin{smallmatrix} #1 \end{smallmatrix} \right)}
\newcommand{\defeq}{\vcentcolon=} 
\newcommand{\rdefeq}{=\vcentcolon} 
\DeclareMathOperator{\Ad}{Ad}
\DeclareMathOperator{\Det}{Det}
\DeclareMathOperator{\Id}{Id}
\DeclareMathOperator{\Lie}{Lie}  
\DeclareMathOperator{\tr}{tr}	   
\DeclareMathOperator{\Tr}{Tr}	   
\DeclarePairedDelimiter\abs{\lvert}{\rvert}
\DeclarePairedDelimiter\norm{\lVert}{\rVert}
\newcommand{\fsG}{\underline{G}} 
\newcommand{\fsA}{\underline{\bbA}} 
\newcommand{\fsE}{\underline{E}} 
\newcommand{\fsF}[1][]{%
\ifthenelse{\equal{#1}{}}%
{\underline{F}}%
{\underline{F_{#1}}}%
} 
\newcommand{\fsV}{\underline{V}} 
\newcommand{\fsbbC}{\underline{\bbC}}
\newcommand{\fsbbR}{\underline{\bbR}}
\newcommand{\fsbbRps}{\underline{\bbR^\ast_+}}
\newcommand{\fsU}[1]{\underline{U(#1)}}
\newcommand{\fsSU}[1]{\underline{SU(#1)}}
\newcommand{\fsku}[1]{\underline{\ku(#1)}}
\newcommand{\lieG}{\kg} 
\newcommand{\ggG}{\calG} 
\newcommand{\invG}{\calI^G} 
\newcommand{\connA}{\calA} 
\newcommand{\invA}{\calI^{\bbA}} 
\newcommand{\fieldsE}{\calE} 
\newcommand{\fieldsF}{\calF} 
\newcommand{\invF}[1][]{%
\ifthenelse{\equal{#1}{}}%
{\calI^F}%
{\calI^{F_{#1}}}%
} 
\newcommand{\invE}{\calI^E} 
\newcommand{\dress}{\calD} 
\newcommand{\udress}{{\overline{\calD}}} 
\newcommand{\spaceupsi}{\mathbf{F}} 
\newcommand{\FC}{\text{\textup{C}}} 
\newcommand{\GT}{\text{\textup{GT}}} 
\newcommand{\GA}{\text{\textup{GA}}} 
\newcommand{\DC}{\text{\textup{DC}}} 
\newcommand{\UDC}{\text{\textup{UDC}}} 
\newcommand{\GF}{\text{\textup{F}}} 
\newcommand{\GFM}{{\widehat{\GF}}} 
\newcommand{\GFDC}{{\widetilde{\GF}}} %
\newcommand{\DCGFM}{\DC\GFM} %
\newcommand{\dd}{\text{\textup{d}}}
\newcommand{\fsdd}{\text{\textbf{d}}}
\newcounter{mnotecount}[section]
\renewcommand{\themnotecount}{\thesection.\arabic{mnotecount}}
\newcommand{\mnote}[1]%
{\protect{\stepcounter{mnotecount}}${}^{\text{\footnotesize$\bullet$\themnotecount}}$%
\reversemarginpar%
\marginpar{\raggedleft\footnotesize$\bullet$\themnotecount: #1}}
\newlength{\mnotewidth}
\numberwithin{equation}{section}
\begin{document}
\renewcommand\figurename{Fig.}


{
\makeatletter\def\@fnsymbol{\@arabic}\makeatother 
\title{Gauge Fixing in QFT and the Dressing Field Method}
\author{Mathilde Guillaud, Serge Lazzarini, Thierry Masson\\
{\small Centre de Physique Théorique}%
\\
\small{Aix Marseille Univ, Université de Toulon, CNRS, CPT, Marseille, France}\\[2ex]
}
\date{version: June 28, 2024}

\maketitle
}

\begin{abstract}
In this paper, we revisit the Dressing Field Method (DFM) in the context of Quantum (Gauge) Field Theories (QFT). In order to adapt this method to the functional path integral formalism of QFT, we depart from the usual differential geometry approach used so far to study the DFM which also allows to tackle the infinite dimension of the field spaces. Our main result is that gauge fixing is an instance of the application of the DFM. The Faddeev-Popov gauge fixing procedure and the so-called unitary gauge are revisited in light of this result.
\end{abstract}


\tableofcontents

\newpage
\section{Introduction}
\label{sec introduction}

The Dressing Field Method (DFM) was introduced in \cite{FourFranLazzMass14a} as a way to reduce gauge degrees of freedom in gauge field theories as a change of variables among the fields of the theory. Since then, many applications of this method have been proposed, in different contexts by collecting examples (some of them coming from the literature, see \cite{AttaFranLazzMass18a} for a review), but always in relation to \emph{classical} gauge field theories. This is why, until now, this method was only considered in the framework of differential geometry, which is the natural one for classical gauge field theories. 

Let us just recall that to apply the DFM, one has to select in the gauge model a  (group valued) field $u$, the dressing field, which supports a specific gauge transformation: $u$ must be constructed using (part/some of the) degrees of freedom \emph{in} the model, so that it is not an external element of the model. Then the dressing field is used to “dress” all the gauge and the matter fields in the model with relations which look like gauge transformations (but they are not!). This produces dressed fields with less (and even no more in the best case scenario) gauge variance. The classical examples studied so far show that dressed fields are \emph{composite fields} while keeping the locality principle.

\smallskip
In this paper, we would like to start the study of applications of the DFM at the quantum level, in the functional approach to Quantum Field Theories (QFT). The first application will focus on the Faddeev-Popov gauge fixing procedure (FPGFP) in the functional integral, whose purpose, as the one of the DFM, is to get rid of gauge degrees of freedom (Section~\ref{sec FPGFP revisited}). The FPGFP relies on the choice of a representative in each gauge class of fields, while the DFM makes apparent gauge invariant fields. 

The main result of this paper is that, in the FPGFP, \emph{the Gauge Fixing Procedure turns out to be an instance of the DFM}. In short: for \emph{ideal} gauge fixing maps (see Definition~\ref{def ideal gauge fixing}), the transformation occurring in the FPGFP turns out to be a dressing composition, and not a gauge transformation as usually claimed. This result is proved using our natural Assumption~\ref{assum equivariant map ideal}. Upon using this result, we rewrite the FPGFP in the framework of the DFM, taking into account the subtleties of the FPGFP and the special features of the DFM, in particular concerning gauge invariant fields.

In the recent paper \cite{BergFran24a}, the conclusion that the gauge fixing procedure is an instance of the DFM is also drawn for a $U(1)$ model in the Lorenz gauge. We refer to this paper for bibliographical comments about the comparison between the DFM and the gauge fixing procedure.

One important consequence of this result is the possibility to compare different gauge fixing conditions by looking at their associated dressing fields in the same functional space. Indeed, the dressing field $u$ is constructed out of the fields contained in the model as expected by the method, but it also uses (as expected in relation to the FPGFP) the extra ingredient which is the gauge fixing condition. For instance, this allows us to relate the $R_\xi$ gauge fixing condition to the “unitary gauge” fixing condition by taking the limit $\xi \to \infty$ \emph{at the level of dressing fields $u$ themselves}. It is worthwhile to notice that there is no consensus that unitary gauges are true gauge fixings, see for instance  \cite{Wein73a} for one viewpoint and \cite{DolanJackiw} for the other one.  However,  several examples of “unitary gauges”, for instance in the Standard Model of Particle Physics (SMPP)~\cite{MassWall10a,AttaFranLazzMass18a, FourFranLazzMass14a}, can be understood as an application of the DFM.  The above mentioned limit amounts to considering that all these “gauge fixing conditions” ($R_\xi$ and unitary gauges) fall into the unifying standpoint of the DFM. 

One key feature of many examples of the DFM studied so far is that the dressing field $u$ \emph{is local} in the fields in the model (in the usual sense of QFT). However, for many gauge fixing conditions (Lorenz, $R_\xi$), we can observe that the dressing field $u$ is \emph{not} local in the fields in the model. This criteria of locality allows us to set the “unitary gauges” apart from these gauge fixing conditions. It is already known that the “unitary gauges” are of major interest because they show the observed degrees of freedom. Following the (philosophical) line of reasoning developed in \cite{Fran19a} (see also \cite{BergFran24a}) about the locality of $u$ in terms of the fields in the model, we make the assumption that the locality of the dressing field is related to the observability of the dressed fields. It is out of the scope of this paper to address this point further. 

\smallskip
As explained in details in Section~\ref{sec the framework}, in the present paper we will not use the usual fiber bundle approach to gauge field theories. Until now, the dressing field method has been developed and illustrated in that framework since we focused mainly on classical field theories. But this is not the most pertinent framework for the functional approach to QFT, even if it can be very useful for specific problems. For instance, the geometrical structures are certainly not the best tools to use in the functional integral of the quantization procedure.

So, for the applications we have in mind, especially the relation between the DFM and the FPGFP, we have to adapt the DFM to the usual tools devoted to this procedure. This is why, in this paper, we rewrite the dressing field method in a more flexible framework, based, on one hand, on \emph{functional spaces}, that is (smooth) maps on space-time (or locally on space-time) with values in some spaces (Lie group, representation vector spaces for these Lie groups…), and, on the other hand, on the \emph{gauge group} defining the gauge model 
under study. 

In order to characterize “gauge fields”,\footnote{In this paper, “gauge fields” collectively refers to all the fields in the theory on which the gauge group acts.} we will then equip these functional spaces with actions of the gauge group. These functional spaces endowed with such an action will be called \emph{field spaces}. It is worthwhile to notice that different field spaces can be based on the same underlying functional space, but with different actions of the gauge group: this will play a key role in our approach. Examples of such spaces are provided in Section~\ref{sec the framework}, where the relation to the usual approach in terms of fiber bundles and connections is explained. 

We will also introduce maps between these functional/field spaces, in order to get a general framework to write the DFM using such instances of maps. Especially, in Section~\ref{sec field-composer and the dressing field method}, we put forward the concept of “Field-Composer”, which can be used at many places in relation to the DFM. This shows in particular that the DFM can be naturally conceived in the above mentioned framework of functional spaces and actions of the gauge group defined on them.

\medskip
Many computations given in the paper may look “usual” on first reading. But, as mentioned in several papers now (see the review \cite{AttaFranLazzMass18a} for all the details and references therein), the DFM is close in many respect, but not equivalent, to the ordinary methods used so far to reduce gauge symmetries. It was already noticed that it can “replace” the Spontaneous Symmetry Breaking Mechanism (SSBM) in the SMPP, opening some new avenues for understanding the Electro-Weak sector of the SMPP (since it decouples the apparition of the observed degrees of freedom from the choice of an energy scale at which to produce mass terms). In the present paper, we open a new chapter by relating the DFM to the FPGFP that was thought to be quite different before the present work (even by the authors).

Let us give a simple illustration of the fact that the DFM provides highly satisfactory responses to some usual questions related to gauge fixing. To do that, let us apply the DFM to the simple example of an Abelian $U(1)$ toy model defined by the Lagrangian (we use notations introduced in Section~\ref{sec rxi gauge fixing and unitary gauge})
\begin{align}
\label{L-U(1)}
L[\phyA, \phi] \defeq 
[(\partial_\mu - i e \phyA_\mu) \phi]^\dagger [(\partial^\mu - i e \phyA^\mu)\phi] - V(\phi) - \tfrac{1}{4} \phyF_{\mu\nu} \phyF^{\mu\nu}
\end{align}
where $\phi$ is $\bbC$-valued, $\phyF_{\mu\nu} $ is the field strength tensor associated to $\phyA_\mu$, $V(\phi) = \tfrac{\mu^2}{2} \phi^\dagger \phi + \tfrac{\lambda}{4} (\phi^\dagger \phi)^2$, and the actions of a gauge transformation with $\gamma = e^{i \alpha} \in \fsU{1}$ ($U(1)$-valued smooth map) are $\phi^\gamma \defeq \gamma^{-1}\phi$ and $\phyA_\mu^\gamma = \phyA_\mu + \tfrac{i}{e} \gamma^{-1} \partial_\mu \gamma = \phyA_\mu - \tfrac{1}{e} \partial_\mu \alpha$. Let $\phi = \rho e^{i \chi}$ with $\rho \defeq \abs{\phi}$, so that under the gauge transformation $\gamma$ one has $\rho^\gamma = \rho$ and $\chi^\gamma = \chi - \alpha$. The Lagrangian can be written in the $(\rho, \chi)$ field variables:
\begin{align*}
L[\phyA, \rho, \chi] = 
(\partial_\mu \rho) (\partial^\mu \rho) + \rho^2 (\partial_\mu \chi - e \phyA_\mu)(\partial^\mu \chi - e \phyA^\mu) - V(\rho) - \tfrac{1}{4} \phyF_{\mu\nu} \phyF^{\mu\nu}
\end{align*}
The purpose of the usual gauge fixing procedure for the so-called “unitary gauge” is to remove any occurrence of the $\chi$ field. To do that, the idea is to perform a gauge transformation with $\gamma$ such that $\alpha = \chi$. But, for any gauge transformation $\rho \mapsto \rho$, $\chi \mapsto \chi - \alpha$, $\phyA_\mu \mapsto \phyA_\mu - \tfrac{1}{e} \partial_\mu \alpha$, the expression $\partial_\mu \chi - e \phyA_\mu$ transforms into itself (as expected). So, there is no gauge transformation that can remove the $\chi$ field.\footnote{It is customary that only a “partial” gauge transformation with $\alpha = \chi$ applied only to the fields $\phyA_\mu$, but not to the field $\chi$, could do the job. This is clearly not a satisfactory procedure.}

The DFM is strongly related to this line of reasoning and its success, for the same problem, relies on the fact that it considers the right objects in the right spaces, and interprets some usual relations in a different manner (gauge transformations for instance).

The first step consists in identifying in the model the dressing field $u$ which takes its values in $U(1)$ and which transforms as $u^\gamma = \gamma^{-1} u$. With the previous notation, a natural candidate for $u$ is $u = e^{i\chi}$, that is, we write $\phi = \rho u$, so that $u$ is a local expression in terms of the components of $\phi$. Here, we see that $u$ looks very much like the $\gamma$ proposed in the unitary gauge fixing procedure. The second step of the method is to dress all the gauge fields with $u$, using the usual relations for the action of the gauge group, but with $u$ instead of $\gamma$. Here again, it looks like we perform a gauge transformation on all the fields. But, as explained in detail in \cite{AttaFranLazzMass18a, FourFranLazzMass14a}, \emph{the dressing field $u$ is not an element of the gauge group} so that \emph{the dressing of all the fields by $u$ can not be a gauge transformation} (it is a redistribution of the degrees of freedom in new field variables). The dressed fields for the $\phyA_\mu$'s are the fields $\phya_\mu \defeq \phyA_\mu + \tfrac{i}{e} u^{-1} \partial_\mu u$ and the dressed field for $\phi$ is $\rho$. Since this change of variables in the space of fields is invertible, one can write the Lagrangian in terms of these dressed fields:
\begin{align*}
L[\phya, \rho] \defeq 
[(\partial_\mu - i e \phya_\mu) \rho]^\dagger [(\partial^\mu - i e \phya^\mu)\rho] - V(\rho) - \tfrac{1}{4} \phyf_{\mu\nu} \phyf^{\mu\nu}
\end{align*}
where $\phyf_{\mu\nu}$ has the same expression in terms of the $\phya_\mu$'s as $\phyF_{\mu\nu}$ in terms of the $\phyA_\mu$'s. In this Lagrangian, the $\chi$ field has disappeared as desired. Notice that a change of field variables yields a Jacobian in the functional integral. Two examples of such Jacobians are computed in Appendix~\ref{sec functional differential and jacobians}.

One way to understand why the procedure works with the DFM but not with the gauge transformation is to remember that the gauge transformation defined by $\gamma$ cannot change the status of the objects, in particular the fields $\phyA_\mu$, which still define a \emph{connection} $1$-form. By definition, a gauge transformation preserves field spaces (since a field space is precisely defined to support a specific action of the gauge group, see Section~\ref{sec the framework}). On the contrary, in this example at hand, the dressing field $u$ in the DFM, which captures the same degrees of freedom as $\gamma$, amounts to defining objects (the dressed fields) belonging to \emph{new field spaces}. The fields $\phya_\mu$ no longer define a connection $1$-form since they form a gauge invariant object (they belong to a field space supporting the trivial action of the gauge group, see the notion of Field-Composer in Section~\ref{sec field-composer and the dressing field method}). In the terminology to be defined in Section~\ref{sec the framework}, $\gamma$ and $u$ belong to the same functional space, as $U(1)$-valued functions, while they do not belong to the same field spaces since they do not support the same action of the gauge group. It is the same for the functions $\phyA_\mu$ and $\phya_\mu$. So, by its very definition, a gauge transformation cannot hide the field $\chi$ (invariance of the combination $\partial_\mu \chi - e \phyA_\mu$), while the approach of the DFM is to “compose” the $\phyA_\mu$'s and $\chi$ functions into the new fields $\phya_\mu$. This is why $\chi$ disappears in the dressed Lagrangian, as part of the $\phya_\mu$'s.

\section{The Framework}
\label{sec the framework}

The usual modern mathematical approach to (classical) gauge fields makes use of fiber bundles. Here, as explained in the Introduction, we will not use this framework, since we will only consider \emph{local} fields (on the space-time manifold). Indeed, one of the main results concerning the DFM, \cite[Prop.~2]{FourFranLazzMass14a}, tells us that the existence of a \emph{global} dressing field with values in the whole structure group implies the triviality of the principal fiber bundle.\footnote{In the paper, we focus ourselves on the whole structure group and not to possible subgroups.} So, instead of relying on fiber bundles to identify the field spaces, we will rely on \emph{the action of the gauge group on local fields} defined on open subsets $U$ of the $m$-dimensional space-time manifold $M$. Working with such local fields will circumvent the global triviality constraint and permit to make direct contact with the structures used in functional integrals of QFT. Notice that $U$ can be $M$ itself: in QFT, one has $M = \bbR^4$ and all the fiber bundles are trivial (contractive space) so that one can take $U = \bbR^4$.

\subsection{Functional Spaces, Field Spaces, and Gauge Group Actions}

Let us denote by $G$ the structure group of our model, with Lie algebra $\lieG$. For any open subset $U$ of $M$ and any representation vector space $E$ of $G$, let us introduce the following local \emph{functional spaces}:
\begin{align*}
\fsG_U &\defeq \{ g : U \to G \},
&
\fsA_U &\defeq \{ a = (a_\mu) \ /\  a_\mu : U \to \lieG \},
&
\fsE_U &\defeq \{ \varphi : U \to E \},
\end{align*}
where all the maps are smooth. The space $\fsG_U$ is a group when equipped with the natural group law inherited from the group law of $G$ and, in the same way, $\fsE_U $ is a vector space. We emphasize that these spaces are equipped only with their functional space structure (which depends on the target space, and on which topological structures could be added, but this is outside the scope of this paper). The main point of our approach is that these spaces will be equipped with different actions of the gauge group.

Let us then first define the \emph{local gauge group} $\ggG_U$ as follows: it is $\fsG_U$ as a group (and so as a functional space), equipped with the right action of (the group) $\fsG_U$ defined by $\gamma^g \defeq \alpha_g(\gamma) \defeq g^{-1} \gamma g$ for any $g \in \fsG_U$ and $\gamma \in \ggG_U$. It is important to distinguish the two mathematical structures: $\fsG_U$ is a group, and $\ggG_U$ is a group equipped with an action of the group $\fsG_U$. Notice that this action induces, with the same formula, an action of the group $\ggG_U$ on itself. This is this action that we will consider in the following.

As pointed out before, we equip now some functional spaces with right actions of the group $\ggG_U$, and we call them \emph{field spaces}. It will be important to remember that different field spaces can have the same underlying functional space, since the actions can be different. The first field space at hand is $\ggG_U$ for which the functional space is  $\fsG_U$ equipped with the above action $\alpha$. We will use special notations for the following field spaces:\footnote{Wherever possible, we will also try to use different notations for the elements of these spaces}
\begin{itemize}
\item The \emph{field space of (local) connections} $\connA_U$ is the functional space $\fsA_U$ equipped with the action $A \mapsto A^\gamma \defeq \gamma^{-1} A \gamma + \gamma^{-1} \dd \gamma$ for any $\gamma \in \ggG_U$ and $A \in \connA_U$.

\item The \emph{field space of $E$-valued fields} $\fieldsE_U$ is the functional space $\fsE_U$ equipped with the action $\phi \mapsto \phi^\gamma \defeq \ell_{\gamma^{-1}} \phi$ for any $\gamma \in \ggG_U$ and $\phi \in \fieldsE_U$, and where $\ell$ is the representation of $G$ on $E$ (\textit{i.e.} a left action).

\item The \emph{field space of invariant connections} $\invA_U$ is the functional space $\fsA_U$ equipped with the trivial action $B \mapsto B^\gamma \defeq B$ for any $\gamma \in \ggG_U$ and $B \in \invA_U$.

\item The \emph{field space of invariant $E$-valued fields} $\invE_U$ is the functional space $\fsE_U$ equipped with the trivial action $\psi \mapsto \psi^\gamma \defeq \psi$ for any $\gamma \in \ggG_U$ and $\psi \in \invE_U$.

\item The \emph{dressing field space} $\dress_U$ is the functional space $\fsG_U$ equipped with the action $u \mapsto u^\gamma \defeq \gamma^{-1} u$ for any $\gamma \in \ggG_U$ and $u \in \dress_U$.

\item The \emph{undressing field space} $\udress_U$ is the functional space $\fsG_U$ equipped with the action $v \mapsto v^\gamma \defeq v \gamma$ for any $\gamma \in \ggG_U$ and $v \in \udress_U$.
\end{itemize}

\medskip
Let us explain how these definitions are related to the usual approach on gauge field theories using principal bundles and associated bundles. Indeed, our present approach can be considered as a local version of this usual approach and the previous definitions are obviously strongly related to this approach. 

Let $P = P(M,G)$ be a $G$-principal bundle over the (space-time) base manifold $M$, and let $F$ be a space equipped with a left action of $G$ denoted by $(f,g) \mapsto \rho(g)f$ for any $f \in F$ and $g \in G$. Then the space of (smooth) sections of the associated fiber bundle $P \times_\rho F$ is isomorphic to the space of (smooth) equivariant maps $\phi : P \to F$ satisfying $\phi(p \cdot g) = \rho(g^{-1}) \phi(p)$ for any $p \in P$ and $g \in G$, where $p \cdot g$ is the right action of $G$ on $P$. It is well-known that the gauge group $\ggG$ of $P$ is isomorphic with the space of sections of the associated bundle $P \times_\alpha G$ for the action $\alpha_g(\gamma) = g^{-1} \gamma g$ defined above. Let us denote by $\Psi : P \to G$ a generic element of the gauge group considered as an equivariant map (for the $\alpha$ action) $P \to G$. Then, with previous notations and identifications, the gauge group action $\phi \mapsto \phi^\Psi$ on sections of $P \times_\rho F$ takes the form $\phi^\Psi(p) \defeq \rho(\Psi(p)^{-1}) \phi(p)$. 

Let $U \subset M$ be an open subset such that $P_{\mid U} \simeq U \times G$ and let $s : U \to P_{\mid U}$ be a trivializing section. For any equivariant map $\phi : P \to F$, define its local section $\varphi \defeq s^\ast \phi$ over $U$ and let $\gamma \defeq s^\ast(\Psi)$. Then the gauge group action \emph{at the level of local sections} takes the form $\varphi \mapsto \varphi^\gamma$ with $\varphi^\gamma = \rho(\gamma^{-1}) \varphi$. In particular, the action of the gauge group on itself takes the form presented above. In the same way, we recover the action on (local) connections $A \in \connA_U$.

As expected, there is then a strong relation between the expression of the action of the gauge group on local fields and the field space in which these local fields belong, since the action determines $\rho$, which in turn determines the associated fiber bundle. For instance, let us consider the dressing field space $\dress$. The left action of $G$ to consider is the left multiplication on $G$, $L_g(g') = g g'$, considered as an action of $G$ (group) on $G$ (fiber). Then, a dressing field is a local section of the associated fiber bundle $P \times_L G$, and it is well-known that $P \times_L G \simeq P$. Since a global section of $P$ can only exist if and only if $P$ is trivial, we can not expect such sections (dressing fields) to be globally defined except in the trivial situation $P = M \times G$. But at the local level, local dressing fields can always be considered.

So, working at the local level (over $U$ for which $P_{\mid U}$ is trivial) amounts to considering “local sections” which are always well-defined, and identifying the actions of the gauge group allows to understand the global geometric structures to which these fields (should) belong. This is why in this paper we have chosen to consider gauge fields through this approach. In particular, we will not take interest in the “changes of trivialization”, which are the usual way to identify the bundle structure on which the fields live. Our main focus is on the actions of the gauge group, considered itself as a field space of local sections.

\smallskip
To simplify the presentation and when the open subset $U$ is fixed, we will omit it in the notations.

\subsection{Field-Composer and the Dressing Field Method}
\label{sec field-composer and the dressing field method}

In \cite{FourFranLazzMass14a}, we used a lot the notion of “composite fields”. We would like to clarify its meaning in light of the current approach. The formal definitions and developments presented below may seem cumbersome at first sight, but they are in fact quite useful (and almost necessary) for correctly interpreting the various structures involved in the gauge-fixing process. Moreover, these structures are proving useful and efficient for carrying out certain calculations.

In the following, we will use generic notations for functional spaces and field spaces. Let $\fsF$ be a functional space. Denote by $\fieldsF$ (resp. $\invF$) this functional space equipped with an \textit{a priori} non trivial action of $\ggG$ (resp. with the trivial action of $\ggG$). 
This non trivial action (physically interesting and useful) is extracted from the gauge field model at hand.  $\fieldsF$ is then the usual space for a gauge field, for instance, $\connA$ or $\fieldsE$ given above. In contrast, the trivial action defining $\invF$ will arise from the DFM.
Hence, the field space $\fieldsF$ is the cornerstone of forthcoming constructions.

When necessary, field spaces of type $\fieldsF$ will be distinguished by lower indices.
\begin{definition}
Let $\fieldsF_1, \dots, \fieldsF_{r+1}$ be some generic field spaces on which the actions of the gauge group are denoted by $\fieldsF_i \ni \phi_i \mapsto \phi_i^\gamma$ for any $\gamma \in \ggG$. A \emph{field-composer} is a map $\FC : \fieldsF_1 \times \cdots \times \fieldsF_{r} \to \fieldsF_{r+1}$ which is local in term of fields and satisfies the $\ggG$-equivariance
\begin{align*}
\FC(\phi_1^\gamma, \dots, \phi_r^\gamma) = \FC(\phi_1, \dots, \phi_r)^\gamma
\end{align*}
for any $\phi_i \in \fieldsF_i$ and $\gamma \in \ggG$. 
\end{definition}
Note that for $r=1$, a field-composer is just a $\ggG$-equivariant map between two field spaces.

\smallskip
Let us write $\FC(\phi_i) = \FC(\phi_1, \dots, \phi_r)$. Then one has $\FC(\phi_i^{\gamma_1 \gamma_2}) = \FC(\phi_i^{\gamma_1})^{\gamma_2} = \FC(\phi_i)^{\gamma_1 \gamma_2}$ for any $\gamma_1, \gamma_2 \in \ggG$ since $\phi_i^{\gamma_1 \gamma_2} = (\phi_i^{\gamma_1})^{\gamma_2}$.

Recall that the locality of $\FC$ means that the value of $\FC(\phi_i)$ at any (space-time) point depends only on the values at that point of the fields $\phi_i$ and a finite number of their derivatives.

\medskip
Using the generic notations, let $\GA : \fsF \times \fsG \to \fsF$ be the “Gauge Action Transformation” map which associates to $(\varphi, g) \in \fsF \times \fsG$ the element in $\fsF$ which \emph{would formally correspond to the gauge action of $g$ on $\varphi$ if $g$ were in $\ggG$ and $\varphi$ in $\fieldsF$} (the field space equipped with a non trivial action of $\ggG$).  Since $\GA$ is the functional expression of a right action, we have
\begin{align*}
\GA(\GA(\varphi, g_1), g_2) = \GA(\varphi, g_1 g_2)
\end{align*}
for any $\varphi \in \fsF$ and $g_1, g_2 \in \fsG$.

\begin{proposition}[Declinations of $\GA$ as field-composers]
\label{prop GT-DC-UDC}
For the two declinations of $\fsF$ as field spaces $\fieldsF$ and $\invF$ together with the three declinations of $\fsG$ as field spaces $ \ggG$,  $\dress$, and $\udress$, the gauge action transformation map $\GA$ induces the only three field-composers $\GT : \fieldsF \times \ggG \to \fieldsF$ (“Gauge Transformation”), $\DC : \fieldsF \times \dress \to \invF$ (“Dressing Composer”) and $\UDC : \invF \times \udress \to \fieldsF$ (“Un-Dressing Composer”).
\end{proposition}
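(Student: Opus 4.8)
The plan is to test, for each way of declining the two source factors $\fsF,\fsG$ and the target $\fsF$ as field spaces, whether the single fixed map $\GA$ satisfies the $\ggG$-equivariance required of a field-composer, and to read off from one group-theoretic identity exactly which declinations succeed. The whole computation rests on two inputs already in place: the action property $\GA(\GA(\varphi,g_1),g_2)=\GA(\varphi,g_1g_2)$ (together with $\GA(\varphi,e)=\varphi$, which is part of $\GA$ being the functional expression of a right action of $\fsG$), and the defining feature that the non-trivial action carried by $\fieldsF$ is precisely $\varphi^\gamma=\GA(\varphi,\gamma)$, whereas the trivial action carried by $\invF$ reads $\varphi^\gamma=\varphi=\GA(\varphi,e)$.

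First I would unify the source and target declinations by writing $\varphi^\gamma=\GA(\varphi,\epsilon_1)$ on the source and $\GA(\varphi,g)^\gamma=\GA(\GA(\varphi,g),\epsilon_3)$ on the target, where the selectors $\epsilon_1,\epsilon_3\in\{\gamma,e\}$ encode the choice of $\fieldsF$ (value $\gamma$) or $\invF$ (value $e$) on the first factor and on the target, respectively. Using the action property to collapse the nested $\GA$'s, the equivariance $\GA(\varphi^\gamma,g^\gamma)=\GA(\varphi,g)^\gamma$ becomes $\GA(\varphi,\epsilon_1\,g^\gamma)=\GA(\varphi,g\,\epsilon_3)$. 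This holds for every $\varphi$ as soon as the two second-argument words coincide in $\fsG$, so the problem reduces to the group identity $\epsilon_1\,g^\gamma=g\,\epsilon_3$, to be solved for a constant selector $\epsilon_3\in\{e,\gamma\}$ once the declination of the second factor fixes $g^\gamma$: conjugation $\gamma^{-1}g\gamma$ for $\ggG$, left translation $\gamma^{-1}g$ for $\dress$, right translation $g\gamma$ for $\udress$.

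Next I would run through the six source declinations. For $(\fieldsF,\ggG)$ one gets $\epsilon_1 g^\gamma=\gamma\gamma^{-1}g\gamma=g\gamma$, forcing $\epsilon_3=\gamma$, i.e.\ target $\fieldsF$: this is $\GT$. For $(\fieldsF,\dress)$ one gets $\gamma\gamma^{-1}g=g$, forcing $\epsilon_3=e$, i.e.\ target $\invF$: this is $\DC$. For $(\invF,\udress)$ one gets $g\gamma$, forcing $\epsilon_3=\gamma$, i.e.\ target $\fieldsF$: this is $\UDC$. For the remaining three source declinations $(\fieldsF,\udress)$, $(\invF,\ggG)$, $(\invF,\dress)$ the required selector comes out as $g^{-1}\gamma g\gamma$, $g^{-1}\gamma^{-1}g\gamma$, $g^{-1}\gamma^{-1}g$ respectively, none of which is the constant $e$ nor the constant $\gamma$; hence no target declination renders $\GA$ equivariant. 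Since $\GA$ is local in its argument, the three surviving maps are automatically local, so they are genuine field-composers mapping to exactly the field spaces named in the statement; the locality bookkeeping and the check $\GA(\varphi,e)=\varphi$ are routine.

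The main obstacle is the uniqueness (``only three'') direction: converting a word inequality into an actual failure of equivariance requires that $\GA(\varphi,\cdot)$ separate the relevant group elements, i.e.\ that the action on $\fieldsF$ be effective enough. For the discarded declinations where one side collapses to $\varphi$ and the other to $\GA(\varphi,\gamma)$ with $\gamma\neq e$, mere non-triviality of the $\fieldsF$-action suffices. The conjugation case $(\invF,\ggG)$ is the delicate one: the obstructing word is the commutator $g^{-1}\gamma^{-1}g\gamma$, which vanishes identically when $G$ is abelian. Thus the clean count ``exactly three'' is the statement for a (generic) non-abelian structure group, while $\GT$, $\DC$ and $\UDC$ remain present and map to the stated targets in all cases; I would state this non-centrality hypothesis explicitly at the point where the uniqueness argument uses it.
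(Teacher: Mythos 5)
Your proof is correct, and for the three surviving cases it performs exactly the computation the paper does: collapse the nested $\GA$'s via the right-action property $\GA(\GA(\varphi,g_1),g_2)=\GA(\varphi,g_1g_2)$ and match group words in the second argument. Where you genuinely add something is in the uniqueness direction. The paper disposes of ``only three'' with a one-line ``easy to check,'' whereas you reduce all $2\times3\times2$ declinations to the single word identity $\epsilon_1\,g^\gamma = g\,\epsilon_3$ and, importantly, note that refuting a word identity is not yet refuting equivariance --- one needs $\GA(\varphi,\cdot)$ to separate the relevant group elements, i.e.\ some effectiveness of the action on $\fieldsF$. Your abelian caveat is a real catch rather than pedantry: for $G$ abelian the conjugation action defining $\ggG$ is trivial, so $\ggG$ coincides with $\invG$ as a field space and $\GA : \invF \times \ggG \to \invF$ becomes a (trivially equivariant) fourth field-composer; since the paper's own running examples (Lorenz gauge, abelian Higgs, the $U(1)$ toy model of the Introduction) have $G=U(1)$, the non-centrality hypothesis you propose to state explicitly is not vacuous, and the count ``exactly three'' as literally stated needs either that hypothesis or the convention that coinciding field spaces are not counted as distinct declinations. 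The one routine point you wave at but should record is that $\GA(\varphi,e)=\varphi$ (the identity axiom of the right action) is what makes the $\invF$-selector $\epsilon_1=e$ or $\epsilon_3=e$ legitimate; the paper uses this silently in the $\DC$ case when it writes $\DC(\phi,u)=\DC(\phi,u)^\gamma$.
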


\begin{proof}
For any $\phi \in \fieldsF$ and $\gamma, \gamma' \in \ggG$, one has $\GA(\phi^{\gamma'}, \gamma^{\gamma'}) = \GA(\GA(\phi, \gamma'), \gamma'^{-1} \gamma \gamma') = \GA(\phi, \gamma' \gamma'^{-1} \gamma \gamma') = \GA(\phi, \gamma \gamma') = \GA(\GA(\phi, \gamma), \gamma')$, so that $\GT(\phi^{\gamma'}, \gamma^{\gamma'}) = \GT(\phi, \gamma)^{\gamma'}$.

For any $\phi \in \fieldsF$, $u \in \dress$ and $\gamma \in \ggG$, one has $\GA(\phi^\gamma, u^\gamma) = \GA(\GA(\phi, \gamma), \gamma^{-1} u) = \GA(\phi, \gamma \gamma^{-1} u) = \GA(\phi, u)$, so that $\DC(\phi^\gamma, u^\gamma) = \DC(\phi, u) = \DC(\phi, u)^\gamma$.

For any $\phi \in \invF$, $v \in \udress$ and $\gamma \in \ggG$, one has $\GA(\phi^\gamma, v^\gamma) = \GA(\phi, v \gamma) = \GA(\GA(\phi, v), \gamma)$, so that $\UDC(\phi^\gamma, v^\gamma) = \UDC(\phi, v)^\gamma$.

It is easy to check that these three field-composers are the only ones we can construct with the proposed field spaces.
\end{proof}

Notice that since $\ggG$ is a group, $\GT$ inherits the relation $\GT(\GT(\phi, \gamma), \gamma') = \GT(\phi, \gamma \gamma')$ from $\GA$. The proof of the following proposition is straightforward.

\begin{proposition}
\label{prop iota mu FC}
The inverse map $\iota : \fsG \to \fsG$, $\iota(g) \defeq g^{-1}$, induces three field-composers ($\ggG$-equivariant maps) $\iota : \ggG \to \ggG$, $\iota : \dress \to \udress$, and $\iota : \udress \to \dress$.

The multiplication map $\mu : \fsG \times \fsG \to \fsG$, $\mu(g_1, g_2) \defeq g_1 g_2$, induces the five field-composers $\mu : \ggG \times \ggG \to \ggG$, $\mu : \ggG \times \dress \to \dress$, $\mu : \udress \times \ggG \to \udress$, $\mu : \dress \times \udress \to \ggG$, and $\mu : \udress \times \dress \to \invG$ where $\invG$ is the functional space $\fsG$ equipped with the trivial action of $\ggG$.\footnote{We restrict ourselves to the three field spaces $\ggG, \dress$ and $\udress$ as source spaces.}
\end{proposition}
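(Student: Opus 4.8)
The plan is to check, case by case, the defining equivariance identity of a field-composer, $\FC(\phi_1^\gamma,\dots,\phi_r^\gamma)=\FC(\phi_1,\dots,\phi_r)^\gamma$, by substituting on each factor the action attached to its source field space and then reading off, from the simplified output, the action attached to the target field space. Since $\iota$ and $\mu$ are purely algebraic group operations, no derivatives are involved, so locality holds automatically in every case and the whole content is the equivariance. A convenient bookkeeping device organizes all the computations at once: under a gauge transformation $\gamma$, an element of $\ggG$ acquires the factor $\gamma^{-1}$ on its left \emph{and} $\gamma$ on its right (conjugation), an element of $\dress$ acquires only $\gamma^{-1}$ on its left, an element of $\udress$ acquires only $\gamma$ on its right, and an element of $\invG$ acquires nothing. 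One simply tracks these ``left ends'' and ``right ends''.

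For the inverse map $\iota$, I would use that inversion reverses order and inverts each factor, so that it exchanges a left end with a right end. Concretely, $(\gamma^{-1}g\gamma)^{-1}=\gamma^{-1}g^{-1}\gamma$ shows $\iota:\ggG\to\ggG$ is equivariant for conjugation on both sides; $(\gamma^{-1}u)^{-1}=u^{-1}\gamma$ shows that the left-multiplicative action of $\dress$ is turned into the right-multiplicative action of $\udress$, giving $\iota:\dress\to\udress$; and symmetrically $(v\gamma)^{-1}=\gamma^{-1}v^{-1}$ gives $\iota:\udress\to\dress$. This also explains structurally why these are the only three possibilities.

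For the multiplication map $\mu$, the key observation is that equivariance of $\mu(g_1,g_2)=g_1g_2$ holds exactly when the ``inner'' ends cancel, i.e. when the right end of the transformed $g_1$ times the left end of the transformed $g_2$ equals the identity. This happens in precisely two ways: either the right end of $g_1$ is $\gamma$ and the left end of $g_2$ is $\gamma^{-1}$ (forcing $g_1\in\{\ggG,\udress\}$ and $g_2\in\{\ggG,\dress\}$, the four cases $\ggG\times\ggG\to\ggG$, $\ggG\times\dress\to\dress$, $\udress\times\ggG\to\udress$, $\udress\times\dress\to\invG$), or both inner ends are empty (forcing $g_1\in\dress$ and $g_2\in\udress$, the case $\dress\times\udress\to\ggG$). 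In each case the surviving outer ends determine the target action: for instance $(\gamma^{-1}u)(v\gamma)=\gamma^{-1}(uv)\gamma$ lands in $\ggG$, while $(v\gamma)(\gamma^{-1}u)=vu$ lands in $\invG$. The remaining four ordered pairs leave an uncancelled $\gamma$ or $\gamma^{-1}$ in the middle and so fail to be equivariant, which accounts for the completeness claim that there are exactly five.

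The only real point of care --- the nearest thing to an obstacle --- is purely notational: because the same underlying functional space $\fsG$ carries three distinct actions, one must read the target field space off the \emph{form} of the computed transformation law rather than presuppose it. The left-end/right-end accounting makes this mechanical and simultaneously yields both the existence of the eight field-composers and the fact that no others occur among the admissible sources.
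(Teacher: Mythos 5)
Your proposal is correct: the paper explicitly omits this proof, calling it straightforward, and your case-by-case verification of the equivariance identities (e.g.\ $(\gamma^{-1}u)^{-1}=u^{-1}\gamma$ for $\iota:\dress\to\udress$, or $(v\gamma)(\gamma^{-1}u)=vu$ for $\mu:\udress\times\dress\to\invG$) is exactly the intended routine check, with locality automatic since $\iota$ and $\mu$ act pointwise. Your left-end/right-end bookkeeping is a clean way to organize the nine ordered source pairs for $\mu$ and to see that the four omitted ones leave an uncancelled $\gamma^{\pm 1}$ trapped between the two arguments, so they cannot match any of the available target actions; this also mirrors the style of the paper's proof of Proposition~\ref{prop GT-DC-UDC}.
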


\smallskip
From these properties, we see that the group structure of the functional space $\fsG$ can be lifted to a group structure on the field space $\ggG$ (\textit{i.e.} a group law that is compatible with the action of $\ggG$ on itself). 
From now on, the maps $\mu$ and $\iota$ will be dropped out to the benefit of their respective realisation.

From Prop.~\ref{prop iota mu FC}, for any two dressing fields $u_1, u_2 \in \dress$, there is a unique $\gamma \defeq  \mu(u_1,\iota(u_2)) = u_1 u_2^{-1} \in \ggG$ such that $u_2 = \gamma^{-1} u_1$. This implies that $\dress$ has a unique orbit for the (free) right action of $\ggG$ on $\dress$. A similar result holds for the right action of $\ggG$ on $\udress$.

\begin{lemma}
\label{lemma DCu and UDCv}
For any $u \in \dress$ (resp. $v \in \udress$), the dressing field map $\DC_u : \fieldsF \to \invF$ (resp. the undressing field map $\UDC_v : \invF \to \fieldsF$)  defined by $\DC_u(\phi) \defeq \DC(\phi, u)$ (resp. $\UDC_v(\psi) \defeq \UDC(\psi, v)$) is an isomorphism, but is not $\ggG$-equivariant. Explicitly, for any $\gamma \in \ggG$, one has $\DC_u(\phi^\gamma) = \DC_{\gamma u}(\phi)$ (resp. $\UDC_v(\psi)^\gamma = \UDC_{v \gamma}(\psi)$) for any $u \in \dress$ and $\phi \in \fieldsF$ (resp. any $v \in \udress$ and $\psi \in \invF$).
\end{lemma}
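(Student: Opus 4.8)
The plan is to reduce everything to the single right action $\GA$, using only its defining property $\GA(\GA(\varphi,g_1),g_2)=\GA(\varphi,g_1g_2)$ together with $\GA(\varphi,e)=\varphi$ (both part of $\GA$ being the functional expression of a right action of the group $\fsG$). By the construction in Proposition~\ref{prop GT-DC-UDC}, the field-composers are literal restrictions of $\GA$ in the second slot, so $\DC_u(\phi)=\DC(\phi,u)=\GA(\phi,u)$ and $\UDC_v(\psi)=\GA(\psi,v)$; moreover the (non-trivial) gauge action on $\fieldsF$ is $\phi^\gamma=\GT(\phi,\gamma)=\GA(\phi,\gamma)$, whereas the action on $\invF$ is trivial. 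Keeping track of which of these two meanings each symbol $(\cdot)^\gamma$ carries is the whole point of the lemma.

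First I would settle the isomorphism claim. Since $\fieldsF$ and $\invF$ share the underlying functional space $\fsF$, and $\GA(\cdot,u)$, $\GA(\cdot,u^{-1})$ are mutually inverse by the action property, $\DC_u$ is a bijection with inverse $\psi\mapsto\GA(\psi,u^{-1})$; note this inverse is exactly $\UDC_{\iota(u)}$, where $\iota(u)=u^{-1}\in\udress$ by Proposition~\ref{prop iota mu FC}. As both $\DC_u$ and its inverse are restrictions of $\GA$ in the second argument, they preserve whatever linear or affine structure the spaces carry, hence $\DC_u$ is an isomorphism of functional spaces; the symmetric argument with $v^{-1}$ handles $\UDC_v$, whose inverse is $\DC_{\iota(v)}$.

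The crux is the explicit transformation formula, which is a direct computation. For the dressing map,
\begin{align*}
\DC_u(\phi^\gamma)=\GA(\phi^\gamma,u)=\GA(\GA(\phi,\gamma),u)=\GA(\phi,\gamma u)=\DC_{\gamma u}(\phi),
\end{align*}
where $\gamma u\in\dress$ is well defined through the field-composer $\mu:\ggG\times\dress\to\dress$ of Proposition~\ref{prop iota mu FC}. Symmetrically, since $\UDC_v(\psi)$ lands in $\fieldsF$ so that $(\cdot)^\gamma=\GA(\cdot,\gamma)$ there,
\begin{align*}
\UDC_v(\psi)^\gamma=\GA(\GA(\psi,v),\gamma)=\GA(\psi,v\gamma)=\UDC_{v\gamma}(\psi),
\end{align*}
with $v\gamma\in\udress$ via $\mu:\udress\times\ggG\to\udress$.

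Finally, the non-equivariance is read off these identities using the injectivity just established. Equivariance of $\DC_u$ would require $\DC_u(\phi^\gamma)=\DC_u(\phi)^\gamma$, and since $\invF$ carries the trivial action the right-hand side equals $\DC_u(\phi)$; injectivity of $\DC_u=\GA(\cdot,u)$ then forces $\phi^\gamma=\phi$ for all $\phi,\gamma$, i.e. the action on $\fieldsF$ trivial, contrary to hypothesis. The same reasoning with the roles reversed — $\psi^\gamma=\psi$ on $\invF$ against the non-trivial transformation of the $\fieldsF$-valued output $\UDC_v(\psi)$, together with surjectivity of $\UDC_v$ — rules out equivariance of $\UDC_v$. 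I do not expect a genuine obstacle: the only delicate point is the bookkeeping of the two distinct actions denoted $(\cdot)^\gamma$, which is precisely the phenomenon the lemma records.
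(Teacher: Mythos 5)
Your proof is correct and follows essentially the same route as the paper's: reduce $\DC_u$ and $\UDC_v$ to $\GA(\cdot,u)$ and $\GA(\cdot,v)$, invert via $\UDC_{\iota(u)}$, and obtain the displayed identities from the right-action property $\GA(\GA(\varphi,g_1),g_2)=\GA(\varphi,g_1g_2)$. Your added injectivity argument that equivariance would force the $\ggG$-action on $\fieldsF$ to be trivial merely makes explicit what the paper leaves as a remark.
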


This proves, as expected, that generically $\fieldsF$ and $\invF$ are not isomorphic as field spaces.

\begin{proof}
It is easy to check that the inverse map for $\DC_u$ is $\UDC_{\iota(u)}$. These maps cannot be $\ggG$-equivariant since the equivariance of $\DC$ (resp. $\UDC$) requires to change at the same time $\phi$ and $u$ (resp. $\psi$ and $v$) as  seen in Prop.~\ref{prop GT-DC-UDC}: here, one has $\DC_u(\phi^\gamma) = \GA(\GA(\phi, \gamma), u) = \GA(\phi, \gamma u) = \DC_{\gamma u}(\phi)$ and $\UDC_v(\psi)^\gamma = \GA(\GA(\psi, v), \gamma) = \GA(\psi, v \gamma) = \UDC_{v \gamma}(\psi)$.
\end{proof}

\medskip
The DFM has been formalized in \cite{FourFranLazzMass14a} in terms of fiber bundles from ideas developed in \cite{MassWall10a}. Let us summarize part of this method in the current approach.\footnote{In \cite{FourFranLazzMass14a}, the method was developed in a very general approach: for instance, the symmetry group to be removed was not necessary the whole group $G$, but a subgroup of it. Here we will not consider this situation.} With the previous notations, consider a gauge invariant Lagrangian $L(\phi_1, \dots, \phi_r)$, and suppose there exists (in the model) a natural way to define a field $u \in \dress$. Then, one can perform a change of variables from the field spaces $\fieldsF_i$ to the field spaces $\invF_i$ by using the dressing field map $\DC_u$, which associates to $\phi_i$ the gauge invariant field $\phi_i^u \defeq\DC(\phi_i, u) \in \invF_i$. The Lagrangian can then be written in terms of the $\phi_i^u$'s, on which all the actions of the gauge group $\ggG$ are trivial, so that the $\ggG$ is not relevant anymore in the model and can be thus ignored. It is explained in \cite{MassWall10a, FourFranLazzMass14a} that the so-called unitary gauge in the Electro-Weak sector of the SMPP, whose purpose is to get rid of the $SU(2)$-gauge symmetry, is simply such a change of variables for a natural dressing field in the model.

Notice that the DFM, as a change of variables in the field spaces, is invertible, at least in a formal way, since  one can “undress” all the fields $\phi_i^u$'s using the Un-Dressing Composer $\UDC$ with the undressing field $v = \iota(u)$ (application of Lemma~\ref{lemma DCu and UDCv}). Our “formal” reservation is due to the fact that such an undressing field may not be “natural” to define for a model without symmetry! In fact, some examples of this procedure have been described in the literature, where the Un-Dressing Composer was used to add an “artificial” gauge symmetry in some models where some good candidate for an undressing field $v$ could be proposed. What is a “good candidate” has to be defined in each situation. For instance, it is explained in \cite{FourFranLazzMass14a} how to promote in such a way a Proca-like Lagrangian describing a gauge invariant massive vector field $A_\mu$ to a Stueckelberg Lagrangian which implements a $U(1)$-gauge symmetry.

Thus, beside the two field-composers $\GT$ (gauge transformations) and $\DC$ (dressing), the Un-Dressing Composer $\UDC$ defined in Prop. \ref{prop GT-DC-UDC} might also have a role to play in gauge field theory.

\section{Gauge Fixing and Dressing Fields}

Let us now show how the formalism introduced in the previous section can be used to revisit the gauge fixing procedure in QFT in the light of the DFM.

\subsection{Gauge Fixing in QFT as an instance of the DFM}
\label{sec gauge fixing as an instance of the DFM}

A \emph{gauge fixing map} is a map $\GF : \fsF[1] \times \cdots \times \fsF[r] \to \fsV$ where the $\fsF[i]$'s are functional spaces underlying the field spaces $\fieldsF_i$ of the model and $\fsV$ is a functional space with values in a vector space $V$. In order to simplify the notations, let us write $\GF(\varphi_i)$ for $\GF(\varphi_1, \dots, \varphi_r)$.

\begin{definition}
\label{def gauge fixing condition}
Given a gauge fixing map $\GF$, its associated \emph{gauge fixing condition} is the gauge fixing equation $\GF(\GA(\phi_i,g)) = 0$ to be solved for $g \in \fsG$ while the $\phi_i \in \fieldsF_i$ are fixed. 
\end{definition}

Notice that, at this point, we only specify the field spaces for the $\phi_i$'s. Concerning $g \in \fsG$, we cannot yet determine its field space and we do not know to which field-composer the gauge action transformation $\GA$ will have to be promoted in the equation $\GF(\GA(\phi_i,g)) = 0$. In order to determine this field space, we will need to identify the action of $\ggG$ to which $g$ is subjected. In order to do that, we consider \emph{ideal gauge fixing maps}, see \emph{e.g.} \cite[right after eq.~(3.327)]{Bert96a}, \cite[p.~361]{AzcaIzqu95a}.

\begin{definition}
\label{def ideal gauge fixing}
An \emph{ideal gauge fixing map} is a gauge fixing map $\GF$ such that, for any $\phi_i \in \fieldsF_i$, there is a \emph{unique} $g \in \fsG$ which solves the gauge fixing equation. 
This implies that there is a well-defined one-to-one map $\GFM :  \fieldsF_1 \times \cdots \times \fieldsF_r \to \fsG$.
\end{definition}

The Gribov ambiguity raises the question of the existence of such ideal gauge fixing maps \cite{Grib78a, Sing78a}. It is out of the scope of the present paper to get involved in that difficult problem. We will adopt the usual “practical” point of view that the gauge fixing maps of interest are ideal.

\smallskip
For an ideal gauge fixing map, we propose the following procedure to determine the action of $\ggG$ on $g$. Let us consider a configuration $(\phi_i)$ with $\phi_i \in \fieldsF_i$. Since the gauge fixing map $\GF$ is ideal, there is a unique $g \defeq \GFM(\phi_i) \in \fsG$ such that $\GF(\GA(\phi_i, g)) = 0$. Let $\gamma \in \ggG$ and let us use the notation $\phi_i' \defeq \GT(\phi_i,\gamma) = \GA(\phi_i, \gamma)$ (these are true gauge transformations). Then there is a unique $g' \defeq \GFM(\phi_i') \in \fsG$ such that $\GF(\GA(\phi_i', g')) = 0$. 

\begin{assumption}
\label{assum equivariant map ideal}
If that makes sense (\textit{i.e.} if it is an action), we \emph{define the action of $\gamma$ on $g$} as the map $g \mapsto g'$, so that, with our usual notations, $\GA(g, \gamma) = g^\gamma \defeq g'$. In other words, the action is such that \emph{the map $\GFM$ is $\ggG$-equivariant}, namely, $\GFM(\phi_i^\gamma) = \GFM(\phi_i)^\gamma$.
\end{assumption}

\smallskip
We can now establish the main result of our approach:
\begin{proposition}
The field space of the element $g \in \fsG$ which solves the ideal gauge fixing condition $\GF(\GA(\phi_i,g)) = 0$ is $\dress$.
\end{proposition}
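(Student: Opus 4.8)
The plan is to compute directly the action $g \mapsto g^\gamma$ prescribed by Assumption~\ref{assum equivariant map ideal} and to show that it coincides with the defining action of the dressing field space $\dress$, namely $g^\gamma = \gamma^{-1} g$. Once this formula is established, comparison with the list of field spaces identifies the field space of $g$ as $\dress$.

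First I would fix a configuration $(\phi_i)$ with $\phi_i \in \fieldsF_i$ and set $g \defeq \GFM(\phi_i)$, the \emph{unique} solution of $\GF(\GA(\phi_i, g)) = 0$ guaranteed by Definition~\ref{def ideal gauge fixing}. For $\gamma \in \ggG$, write $\phi_i' \defeq \GT(\phi_i, \gamma) = \GA(\phi_i, \gamma)$ and $g' \defeq \GFM(\phi_i')$, so that $\GF(\GA(\phi_i', g')) = 0$. The key computation is then to unfold this equation using the composition law of $\GA$:
\begin{align*}
0 = \GF(\GA(\phi_i', g')) = \GF(\GA(\GA(\phi_i, \gamma), g')) = \GF(\GA(\phi_i, \gamma g')).
\end{align*}
Hence $\gamma g'$ solves the gauge fixing equation $\GF(\GA(\phi_i, \cdot)) = 0$ for the original, fixed configuration $(\phi_i)$.

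The decisive step is to invoke uniqueness. Since $\GF$ is ideal, the solution of $\GF(\GA(\phi_i, \cdot)) = 0$ is unique and equals $g = \GFM(\phi_i)$; therefore $\gamma g' = g$, i.e.\ $g' = \gamma^{-1} g$. By Assumption~\ref{assum equivariant map ideal}, the action of $\gamma$ on $g$ is exactly $\GA(g,\gamma) = g^\gamma = g' = \gamma^{-1} g$. It remains only to verify the proviso ``if that makes sense'' in Assumption~\ref{assum equivariant map ideal}, namely that $g^\gamma \defeq \gamma^{-1} g$ genuinely defines a right action: indeed $(g^{\gamma_1})^{\gamma_2} = \gamma_2^{-1} \gamma_1^{-1} g = (\gamma_1 \gamma_2)^{-1} g = g^{\gamma_1 \gamma_2}$, which is compatible with $\GA(\GA(g, \gamma_1), \gamma_2) = \GA(g, \gamma_1 \gamma_2)$. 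This is precisely the action $u^\gamma = \gamma^{-1} u$ defining $\dress$, so $g \in \dress$.

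I expect the entire content of the argument to rest on the uniqueness step: the identity $\gamma g' = g$ is what forces the left-multiplication action, and uniqueness is supplied solely by the ideal hypothesis. Everything else is formal — the composition law $\GA(\GA(\varphi, g_1), g_2) = \GA(\varphi, g_1 g_2)$ and the equivariance built into Assumption~\ref{assum equivariant map ideal}. Thus the main (and essentially only) obstacle is conceptual rather than computational: one must be careful that the candidate map $g \mapsto \gamma^{-1} g$ is checked to be an action before Assumption~\ref{assum equivariant map ideal} can legitimately promote $\GA$ to the corresponding field-composer $\DC$ on $\fieldsF \times \dress$.
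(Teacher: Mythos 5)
Your proposal is correct and follows essentially the same route as the paper's proof: unfold $\GA(\phi_i',g')=\GA(\phi_i,\gamma g')$ via the composition law, invoke uniqueness from the ideal hypothesis to get $\gamma g'=g$, hence $g^\gamma=\gamma^{-1}g$, which is the defining action of $\dress$. Your explicit verification that $g\mapsto\gamma^{-1}g$ is indeed a right action (the ``if that makes sense'' proviso of Assumption~\ref{assum equivariant map ideal}) is a small addition the paper leaves implicit, but it does not change the argument.
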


From this proposition we can now deduce that the gauge action transformation $\GA$ in the previous formulation is the field-composer $\DC$, that $\GF$ is promoted to a map $\invF[1] \times \cdots \times \invF[r] \to \fsV$, that the gauge fixing condition looks like $\GF(\DC(\phi_i,u)) = 0$ to be solved for $u \in \dress$  while the $\phi_i \in \fieldsF_i$ are fixed, and that $\GFM :  \fieldsF_1 \times \cdots \times \fieldsF_r \to \dress$. Beware that $\GFM$ looks like a field-composer but the locality is not secured as it will be shown in some examples below.

\begin{proof}
The proof is quite straightforward: one has to solve for $g' \in \fsG$ the equation $\GF(\GA(\phi_i', g')) = 0$. Notice that $\GA(\phi_i', g') = \GA(\GA(\phi_i, \gamma), g') = \GA(\phi_i, \gamma g')$. Since the gauge fixing map is ideal, if $g \in \fsG$ is the unique solution of $\GF(\GA(\phi_i, g)) = 0$, then one must have $g = \gamma g'$, which implies $\GA(g, \gamma) = g^\gamma = g' = \gamma^{-1} g$. This is the action of $\ggG$ on $\fsG$ defining the dressing field space $\dress$.
\end{proof}

Notice that we depart from the usual way to look at the gauge fixing procedure, in which $g$ is considered as an element of the gauge group. This usual identification may have its root in the fact that the most obvious action used in gauge fields theories on the functional space $\fsG$ is the one defining the gauge group $\ggG$. Indeed, the gauge fixing  condition is a local (and possibly non linear) differential equation in terms of the local functions $(\phi_i, g)$ to be solved for $g$. But, isolated from any other (formal) considerations, this equation alone does not tell us which field space $g$ must belong to, since its structure only constrains the functional space $\fsG$. It is then quite natural to implicitly assume that the action to which $g \in \fsG$ is subjected is the one defining $g$ as a local version of an element of the gauge group $\ggG$. All reasonable physicists are inclined to associate such a map $g$ with a (local) gauge transformation. 

Our result challenges this approach since we use a natural criterion to determine the action of $\ggG$ on the functional space $\fsG$. Obviously, the requirement in Assumption~\ref{assum equivariant map ideal}, that $\GFM$ be $\ggG$-equivariant, could be questioned. However, we consider this condition to be the simplest one which respects the spirit of gauge fields theories, where the gauge group is the central object from which it is natural to define the other structures. It is difficult to ask for another natural condition for $\GFM$ which could take into account the actions of the gauge group.

\subsection{The FPGFP revisited}
\label{sec FPGFP revisited}

Let us now show how the Faddeev-Popov method adapts to our framework. Let us use $\phi$ for all the fields in the model, including the gauge potential, the scalar fields and the fermion fields. Denote by $S(\phi)$ the action functional, $\ggG$ the gauge group and $\fieldsF$ the field space of all the fields in the model. 

The usual method requires three hypotheses:
\begin{enumerate}
\item The action functional $S(\phi)$ is gauge invariant.\label{item invariant action}
\item In our context, the integration along the field space $\fieldsF$ can be commuted with the integration along the gauge group $\ggG$.\label{item commutation}
\item Denote by $\dd[\phi]$ the measure on $\fieldsF$, then for any functional $P(\phi)$, one has, for any gauge transformation $\gamma \in \ggG$, 
\begin{align}
\int_\fieldsF \dd[\phi] P(\phi^\gamma) = \int_\fieldsF \dd[\phi] P(\phi).
\label{eq usual invariant measure}
\end{align}
This relation is equivalent to the requirement that the measure $\dd[\phi]$ is gauge invariant since $\int_\fieldsF \dd[\phi] P(\phi^\gamma) = \int_\fieldsF \dd[\phi^{\gamma^{-1}}] P(\phi)$.
\label{item invariant measure}
\end{enumerate}

\bigskip
The FPGFP relies on the “trivial” expression
\begin{align}
\label{eq 1=Delta delta}
1 = \int_\ggG \dd[\gamma] \Delta_{\text{FP}}[\phi, \gamma] \delta( \GF(\phi^\gamma) )
\end{align}
where $\Delta_{\text{FP}}[\phi, \gamma]$ is the functional determinant of the functional derivative of $\gamma \mapsto \GF(\phi^\gamma)$ along $\gamma$ (at fixed $\phi$). In the notations given in Appendix~\ref{sec functional differential and jacobians}, this is the determinant of the linear map $\fsdd_\ggG (\GF \circ \GT)[\phi, \gamma] : T_\gamma \ggG \to T_{\GF(\phi^\gamma)} \fsV$. One can forget for a while about field spaces and look only at the underlying functional spaces. Then one has to compute the linear map $\fsdd_{\fsG} (\GF \circ \GA)[\phi, \gamma] : T_\gamma \fsG \to T_{\GF(\phi^\gamma)} \fsV$. Let $t \mapsto \gamma(t)$ be a smooth curve in $\fsG$ such that $\gamma(0) = \gamma$ and $\dot{\gamma}(0) = \Tgamma \in T_\gamma \fsG$. Then,  for $\psi \defeq \GA(\phi, \gamma)$, one has $\GF \circ \GA(\phi, \gamma(t)) = \GF \circ \GA(\GA(\psi, \gamma^{-1}), \gamma(t)) = \GF \circ \GA(\psi, \gamma^{-1} \gamma(t))$ so that $\fsdd_{\fsG} (\GF \circ \GA)[\phi, \gamma](\Tgamma) = \frac{d}{dt} (\GF \circ \GA)(\psi, \gamma^{-1} \gamma(t))_{\mid t=0} = \fsdd_{\fsG} (\GF \circ \GA)[\psi, e] \circ T_\gamma L_{\gamma^{-1}} (\Tgamma)$. Taking the determinant one gets
\begin{align}
\label{eq Delta phi gamma Delta psi e}
\Delta_{\text{FP}}[\phi, \gamma]
&= \Delta_{\text{FP}}[\psi, e] \Det(T_\gamma L_{\gamma^{-1}})
= \Delta_{\text{FP}}[\psi] \Det(T_\gamma L_{\gamma^{-1}})
\end{align}
where $e \in \fsG$ is the unit element and $\Delta_{\text{FP}}[\psi]  \defeq \Delta_{\text{FP}}[\psi, e]$. This relation is not often mentioned in the literature: it can be found for instance in a similar form as \cite[eq.~(15.5.17)]{Wein05b}.\footnote{In \eqref{eq 1=Delta delta}, the $\delta$ function selects a unique $\gamma_0$ such that $\psi_0 \defeq \GA(\phi, \gamma_0)$ satisfies $\GF(\psi_0) = 0$. Then, using \eqref{eq Delta phi gamma Delta psi e} for this $\gamma_0$, \eqref{eq 1=Delta delta} gives $\Delta_{\text{FP}}[\psi_0]^{-1} =  \Det(T_{\gamma_0} L_{{\gamma_0}^{-1}}) \int_\ggG \dd[\gamma] \delta( \GF(\phi^\gamma) )$. Up to the missing factor $\Det(T_{\gamma_0} L_{{\gamma_0}^{-1}})$, this relation is often used in the literature as a definition of $\Delta_{\text{FP}}[\phi]$.}

One can now insert \eqref{eq 1=Delta delta} into 
\begin{align}
\calZ &\defeq \int_\fieldsF \dd[\phi] e^{i S(\phi)} 
= \int_\fieldsF \dd[\phi] \int_\ggG \dd[\gamma] \Delta_{\text{FP}}[\phi, \gamma]\delta( \GF(\phi^\gamma) )  e^{i S(\phi)}
\nonumber
\\
&= \int_\fieldsF \dd[\phi] \int_\ggG \dd[\gamma] \Delta_{\text{FP}}[\phi, \gamma]\delta( \GF(\phi^\gamma) ) e^{i S(\phi^\gamma)} \quad\text{by item \ref{item invariant action}}
\nonumber
\\
&=\int_\ggG \dd[\gamma]  \int_\fieldsF \dd[\phi] \Delta_{\text{FP}}[\phi, \gamma]\delta( \GF(\phi^\gamma) ) e^{i S(\phi^\gamma)} \quad\text{by item \ref{item commutation}}
\nonumber
\\
\label{eq Z FP usual}
&=\int_\ggG \dd[\gamma] \Det(T_\gamma L_{\gamma^{-1}}) \int_\fieldsF \dd[\psi] \Delta_{\text{FP}}[\psi]\delta( \GF(\psi) ) e^{i S(\psi)} \quad\text{by item \ref{item invariant measure} and \eqref{eq Delta phi gamma Delta psi e},  with $\psi = \phi^\gamma$}.
\end{align}
The steps that usually follow in the FPGFP will not concern us.

\medskip
We adapt the usual hypotheses to our framework in the following way. Let $\fsF$ be a functional space. We introduce 5 hypotheses:
\begin{enum-hypothesis}
\item The action functional $S$ and the gauge fixing map $\GF$ are defined on $\fsF$.\label{hyp definition on functional spaces}

\item For any $g \in \fsG$ and $\varphi \in  \fsF$, one has $S \circ \GA(\varphi, g) = S(\varphi)$.\label{hyp invariant action}

\item The measure on any field space is the measure on the underlying functional space. Denote by $\dd[\varphi ]$ the measure on $\fsF$, then for any functional $\varphi \mapsto P(\varphi)$ on $\fsF$, one has, for any $g \in \fsG$,
\begin{align}
\label{eq invariant measure}
\int_{\fsF} \dd[\varphi] P \circ \GA(\varphi, g) = \int_{\fsF} \dd[\varphi] P(\varphi).
\end{align}
\label{hyp invariant measure}

\item In our context, the integration along $\fsF$ can be commuted with the integration along $\fsG$.\label{hyp commutation}
\end{enum-hypothesis}

\ref{hyp definition on functional spaces} is not a strong restriction, since $S$ is a local expression on $\fieldsF$, which turns out to be a local expression on $\fsF$ (by forgetting about the action of $\ggG$). Concerning $\GF$, this was already in its definition.
 
\ref{hyp invariant action} is equivalent to the usual hypothesis~\ref{item invariant action} if one goes from field spaces to functional spaces. This is possible since the invariance of the action is proved in a formal way which only involves the “functional form” of the gauge action, which is encoded into the gauge action transformation $\GA$ appearing in our hypotheses. 

\ref{hyp invariant measure} means that the measure on a field space is not related to its defining action, but only on the underlying functional space equipped with the Gauge Action Transformation $\GA$ as required in \eqref{eq invariant measure}.
Finally, Eq.~\eqref{eq invariant measure} can be related to the standard Eq.~\eqref{eq usual invariant measure}. Indeed,  the fulfillment of \eqref{eq usual invariant measure} can be performed at the functional level, 
requiring only the explicit form of the group action. The functional equivalent of \eqref{eq usual invariant measure} can be written as $\int_{\fsF} \dd[\varphi] P(\varphi^g) = \int_{\fsF} \dd[\varphi] P(\varphi)$, which is \eqref{eq invariant measure} since $\varphi^g = \GA(\varphi,g)$.

Because of \ref{hyp invariant measure}, the measures on $\ggG$ and $\fieldsF$ are the measures on $\fsG$ and $\fsF$ respectively, so that \ref{hyp commutation} is equivalent to the usual hypothesis~\ref{item commutation}.
  
Our hypotheses, written at the level of functional spaces, are also true on field spaces and field-composers on field spaces, when these expressions make sense. For instance, in the following we will use $\DC$ in place of $\GA$ for the proper field spaces.

\medskip
Let us now write the FPGFP in our framework. \ref{hyp definition on functional spaces} will allow to consider $S \circ \DC(\phi,u)$ and $\GF \circ \DC(\phi, u)$ for any $\phi \in \fieldsF$ and $u \in \dress$, since $\DC(\phi,u) \in \fsF$ (since $\DC(\phi,u) \in \invF$).

Because the gauge fixing map $\GF$ is ideal, one has
\begin{align*}
\int_{\dress} \dd[u] \Delta_{\text{FP}}[\phi, u] \delta( \GF \circ \DC(\phi, u) ) = 1
\end{align*}
where $\Delta_{\text{FP}}[\phi, u]$ is the determinant of the functional derivative $\fsdd_{\dress} (\GF \circ \DC)[\phi, u]$, as in the usual method. We insert this equality into the expression we want to evaluate:
\begin{align}
\label{eq Z def}
\calZ &\defeq \int_{\fieldsF} \dd[\phi] e^{i S(\phi)}
= \int_{\fieldsF} \dd[\phi] \int_{\dress} \dd[u] \Delta_{\text{FP}}[\phi, u] \delta( \GF \circ \DC(\phi, u) ) e^{i S(\phi)}
\\
&= \int_{\fieldsF} \dd[\phi] \int_{\dress} \dd[u] \Delta_{\text{FP}}[\phi, u] \delta( \GF \circ \DC(\phi, u) ) e^{i S \circ \DC(\phi, u)} \quad\text{by \ref{hyp invariant action}} \nonumber
\\
&= \int_{\dress} \dd[u] \int_{\fieldsF} \dd[\phi] \Delta_{\text{FP}}[\phi, u] \delta( \GF \circ \DC(\phi, u) ) e^{i S \circ \DC(\phi, u)} \quad\text{by \ref{hyp commutation}} \nonumber
\\
\label{eq Z FP u psi}
&= \int_{\dress} \dd[u] \Det(T_u L_{u^{-1}})\! \int_{\invF} \dd[\psi] \Delta_{\text{FP}}[\psi] \delta( \GF (\psi) ) e^{i S(\psi)} \quad\text{by \ref{hyp invariant measure} and \eqref{eq Delta phi gamma Delta psi e} with $\psi = \DC(\phi, u)$.} 
\end{align}

At this point, sticking to the usual computation in the FPGFP, one can factor out the integration of $u$ along $\dress$, and consider only the remaining integration on the space of invariant fields. The action functional $S$, initially expressed on $\fieldsF$, is now expressed on $\invF$ after the change of field variables $\DC_u : \fieldsF \to \invF$  where $u$ is the field variable of the first integration along $\dress$ (our \ref{hyp definition on functional spaces} allows to do that). 

Notice that, looking at the previous computations, our hypotheses can be reformulated for field spaces in the following way:
\begin{enum-hypothesis}[label=(Hyp.'~\arabic*)]
\item The action functional $S$ is defined on $\fieldsF$ and $\invF$ in the same functional way.\label{hypp definition on fields spaces}

\item For any $u \in \dress$ and $\phi \in  \fieldsF$, one has $S \circ \DC(\phi, u) = S(\phi)$.\label{hypp invariant action}

\item  The measure on $\invF$ is the push-forward of the gauge invariant measure on $\fieldsF$ by $\DC_u : \fieldsF \to \invF$ for any $u \in \dress$ and \emph{it is independent of $u \in \dress$}. For any functional $\psi \mapsto P(\psi)$ on $\invF$, one has
\begin{align*}
\int_{\fieldsF} \dd[\phi] P \circ \DC(\phi, u) = \int_{\invF} \dd[\psi] P(\psi), \text{ for any } u \in \dress.
\end{align*}
\label{hypp invariant measure}

\item In our context, the integration along $\fieldsF$ can be commuted with the integration along $\dress$.\label{hypp commutation}

\end{enum-hypothesis}
In \ref{hypp invariant measure}, the measure on $\invF$ is precisely defined. Let us show that this measure does not depend on the dressing field $u \in \dress$ used to define it through the push-forward \emph{if and only if} item~\ref{item invariant measure} of the usual hypotheses holds. 

Let us fix $u \in \dress$. By its very definition, the push-forward measure $\dd_u[\psi]$ defined on $\invF$ along the map $\DC_u$ from the measure $\dd[\phi]$ on $\fieldsF$ is such that, for any functional $P$ on $\invF$, one has $\int_{\invF} \dd_u[\psi] P(\psi) = \int_\fieldsF \dd[\phi] P \circ \DC_u(\phi)$. Any other dressing field $u' \in \dress$ is related  to $u$ by $u' = \gamma u$ for a unique $\gamma \in \ggG$. Using Lemma~\ref{lemma DCu and UDCv} and \eqref{eq usual invariant measure}, one gets 
\[
\int_{\invF} \dd_{u'}[\psi] P(\psi) = \int_\fieldsF \dd[\phi] P \circ \DC_{\gamma u}(\phi) = \int_\fieldsF \dd[\phi] P \circ \DC_{u}(\phi^\gamma) = \int_\fieldsF \dd[\phi] P \circ \DC_{u}(\phi) = \int_{\invF} \dd_u[\psi] P(\psi)
\]
so that $\dd_{u'}[\psi] = \dd_u[\psi]$. 

Conversely, if the measure $\dd_u[\psi]$ defined on $\invF$ fulfills $\dd_{u'}[\psi] = \dd_u[\psi]$ for any $u, u' \in \dress$, then, on account of previous notations, set $Q(\phi) = P \circ \DC_{u}(\phi)$. The above computation and the hypothesis on $\dd_u[\psi]$ then show that $\int_\fieldsF \dd[\phi] Q(\phi) = \int_\fieldsF \dd[\phi] Q(\phi^\gamma)$ for any $\gamma \in \ggG$, which is item~\ref{item invariant measure} since $Q$ can be any functional ($P \mapsto Q$ is invertible using $\UDC_{u^{-1}}$).

\medskip
In the usual approach, the action functional is gauge invariant and it is always evaluated on the same field space. On the contrary, in our framework, thanks to the functional gauge invariance \ref{hyp invariant action} or to \ref{hypp invariant action}, it is first composed with the dressing field map $\DC_u$ and then expressed on $\invF$.

\bigskip
Since the Lorenz gauge is mainly used in standard Faddeev-Popov calculations, let us consider the gauge fixing map $\GF(A) \defeq \partial^\mu A_\mu$ for any $A \in \connA$ (here $\fsV =\underline{\Lie G}$). It is well-known that this gauge fixing map is not ideal, but let us assume nevertheless that it is,  as assumed in many physical developments, as already quoted \cite[right after eq.~(3.327)]{Bert96a}, \cite[p.~361]{AzcaIzqu95a}. Then, the gauge fixing condition $\GF(\DC(A, u)) = 0$ takes the form of a non linear second order differential equation to be solved for $u \in \dress$:
\begin{align*}
u^{-1} (\partial^\mu \partial_\mu u) + (\partial^\mu u^{-1}) (\partial_\mu u) + (\partial^\mu u^{-1}) A_\mu u + u^{-1} A_\mu (\partial_\mu u) + u^{-1} (\partial^\mu A_\mu) u
= 0
\end{align*}
It is well-known that the solution is a \emph{non local} expression $u(A) = \GFM(A)$, that is, it is expressed in terms of $A$ and (at least symbolically) an infinite number of derivatives of $A$. So, for the Lorenz gauge fixing map, the map $\GFM$ defined in Def.~\ref{def ideal gauge fixing} is non local.

This differs from the usual examples illustrating the DFM \cite{MassWall10a, FourFranLazzMass14a, Fran14a, FranLazzMass15a, FranLazzMass15b, FranLazzMass16a, AttaFranLazzMass18a, Fran21a} where the dressing field $u$ was always defined in a \emph{local} way in terms of the fields in the model. This locality plays a crucial role in the debate between the artificiality \textit{versus} the substantiality of gauge symmetries \cite{Fran19a} (see also \cite[Chap.~5]{BergFranFrieGome23a}). 

We will see in Section~\ref{sec rxi gauge fixing and unitary gauge} that the non locality of $u$ is also a characteristic of the $R_\xi$ gauge fixing map, and that it disappears in the limit $\xi \to \infty$ (the so-called unitary gauge fixing condition).

\subsection{Gauge Fixing in QFT as a change of field variables}
\label{sec gauge fixing as a change of field variables}

The previous interpretation of the FPGFP as an application of the DFM is not satisfactory from the original viewpoint of the dressing approach, which consists in a mere change of field variables. Let us see how such a change of field variables can be implemented in $\calZ$ defined in \eqref{eq Z def} in order to compare with the previous version of the FPGFP.

Let us suppose as before that the gauge fixing map $\GF$ is ideal. We will use the following maps: let $\DCGFM : \fieldsF \to \invF$ be defined by $\DCGFM(\phi) \defeq \DC(\phi, \GFM(\phi))$ (see Definition~\ref{def ideal gauge fixing}) and let $\GFDC : \fieldsF \to \fsV$ be defined by $\GFDC(\phi) \defeq \GF \circ  \DCGFM(\phi) = \GF \circ \DC(\phi, \GFM(\phi))$ for any $\phi \in \fieldsF$. 

Then, any $\phi \in \fieldsF$ defines a unique $u = \GFM(\phi) \in \GFM(\fieldsF) \subset \dress$ such that $\GF \circ \DC(\phi, u)  = 0$. This $u$ is used to dress the fields $\phi$ by defining the invariant field $\psi \defeq \DC(\phi, u) = \DCGFM(\phi) \in \GF^{-1}(0) \subset \invF$. One gets a change of field variables $\fieldsF \ni \phi \mapsto (u, \psi) \in \spaceupsi$ where 
\begin{align*}
\spaceupsi &\defeq 
(\GFM \times \DCGFM)(\fieldsF)
= \{ (u,\psi) \in  \dress \times \invF \ \mid \ \exists ! \phi \in \fieldsF \text{ s.t. } u = \GFM(\phi) \text{ and } \psi = \DCGFM(\phi) \}
\\
& \subset \GFM(\fieldsF) \times \GF^{-1}(0) \subset \dress \times \invF.
\end{align*}
Performing this change of field variables in the functional integral defining $\calZ$ gives
\begin{align}
\label{eq Z in u and psi}
\calZ 
&= \int_{\fieldsF} \dd[\phi] e^{i S(\phi)}
= \int_{\spaceupsi} \dd[u] \dd[\psi] J(u, \psi ; \phi)  e^{i S(\psi)}
\end{align}
where $J(u, \psi ; \phi)$ is the functional determinant (the Jacobian) of the change of field variables $C : \spaceupsi \to \fieldsF$, $C(u, \psi) = \phi \defeq \UDC(\psi, u^{-1})$. The computation of $J(u, \psi ; \phi)$ relies on the computation of the functional differential $\fsdd C$ (see Appendix~\ref{sec functional differential and jacobians}). Since the gauge fixing map $\GF : \fsF \to \fsV$ is ideal, the number of degrees of freedom in $\fsV$, that is $\dim V$, is larger than the number of degrees of freedom in $\fsG$, that is $\dim G$. Let us assume that $\dim V = \dim G$, so that there is no over-determination of $u \in \fsG$ by $\GF$. 

The bijective map $\GFM : \fieldsF \to \GFM(\fieldsF) \subset \dress$ satisfies the constrain $\GFDC(\phi) = \GF \circ \DC(\phi, \GFM(\phi)) = 0$ for any $\phi \in \fieldsF$, so that, for any $X \in T_\phi \fieldsF$, one has $\fsdd \GFDC[\phi](X) = 0$. By the composition law, one gets $0 = \fsdd \GFDC[\phi](X) = \fsdd \GF[\DCGFM(\phi)] \left(\fsdd \DCGFM[\phi](X) \right)$ while $\fsdd \DCGFM[\phi](X) = \fsdd_{\fieldsF} \DC[\phi, \GFM(\phi)](X) + \fsdd_{\dress} \DC[\phi, \GFM(\phi)]\left( \fsdd \GFM[\phi](X) \right)$, so that
\begin{align*}
0
 &=\fsdd \GF[\DCGFM(\phi)] \big( \fsdd_{\fieldsF} \DC[\phi, \GFM(\phi)](X) \big)
+ \fsdd \GF[\DCGFM(\phi)] \Big( \fsdd_{\dress} \DC[\phi, \GFM(\phi)]\big( \fsdd \GFM[\phi](X) \big) \Big)
\\
&= \fsdd_{\fieldsF} (\GF \circ \DC)[\phi, \GFM(\phi)](X) 
+ \fsdd_{\dress} (\GF \circ \DC)[\phi, \GFM(\phi)] \circ \fsdd \GFM[\phi](X).
\end{align*}
In the FPGFP, it is assumed that, for any $\phi \in \fieldsF$, $\fsdd_{\ggG} (\GF\circ \GT)[\phi, \gamma] : T_\gamma \ggG  \to T_{\GF\circ \GT(\phi, \gamma)} \fsV$ is invertible, since its determinant is $\Delta_{\text{FP}}[\phi^\gamma]$. This invertibility is a technical property at the level of \emph{functional spaces}, so that it can be assumed in our framework as well. This implies the invertibility of the map $\fsdd_{\dress} (\GF \circ \DC)[\phi, u] : T_u \dress \to T_{\GF \circ \DC(\phi, u)} \fsV$ for any $\phi \in \fieldsF$ and $u \in \dress$ (the hypothesis $\dim V = \dim G$ applies here). This entails 
\begin{align*}
\fsdd \GFM[\phi](X)
&= - \fsdd_{\dress} (\GF \circ \DC)[\phi, u]^{-1} \circ \fsdd_{\fieldsF} (\GF \circ \DC)[\phi, \GFM(\phi)](X).
\end{align*}
This expression gives the functional variation $\fsdd u[\phi]$ of $u = \GFM(\phi)$ along $\phi$ in the change of field variables $\fieldsF \ni \phi \mapsto (u, \psi) \in \spaceupsi$. Now, one can look at the variation $\fsdd \psi[\phi]$ of $\psi = \DC(\phi, \GFM(\phi))$ in this change of field variables. One has
\begin{align*}
\fsdd \psi[\phi](X)
&= \fsdd_{\fieldsF} \DC[\phi, \GFM(\phi)](X)
+ \fsdd_{\dress} \DC[\phi, \GFM(\phi)] \big( \fsdd \GFM[\phi](X) \big).
\end{align*}
The determinant $J(\phi; u, \psi)$ of the linear map $(\fsdd u[\phi], \fsdd \psi[\phi]) : T_\phi \fieldsF \to T_{(u, \psi)} \spaceupsi \subset T_{u}\GFM(\fieldsF) \times T_{\psi} \GF^{-1}(0) \subset T_{u}\dress \times T_{\psi} \invF$ (with $u = \GFM(\phi)$ and $\psi = \DC(\phi, \GFM(\phi))$) is the inverse of the Jacobian $J(u, \psi ; \phi)$ we have to compute in \eqref{eq Z in u and psi}. This determinant depends on the three functional differentials $\fsdd_{\fieldsF} \DC$, $\fsdd_{\dress} \DC$, and $\fsdd \GF$. The two first depend only on the field content of the model (recall that $\DC$ is a gauge-like transformations of the fields) while the last one is the only one which depends on the gauge fixing map $\GF$.

Notice that this approach is computationally impractical since it requires to characterize the space $\spaceupsi \subset \dress \times \invF$, which is not an easy task at first sight, especially if $\GF$ is defined in terms of some differential operator. It requires also to evaluate the Jacobian $J(\phi; u, \psi)$, and then its inverse $J(u, \psi ; \phi)$. In a practical approach, it is easier to rely on the FPGFP which has proved to be very effective. Indeed, for the FPGFP the spaces on which the integration is performed turn out to be field space $\dress \times \invF$, while in the displayed approach the integration must be performed on the subspace $\spaceupsi \subset \dress \times \invF$ which is difficult to characterize. Moreover, the determinant to compute, $\Delta_{\text{FP}}[\psi]$, is quite manageable in the context of QFT when one uses the usual trick of the Berezin integration along Grassmann field variables.

\subsection{Field variables dependence on the gauge fixing map}
\label{sec dependence on the gauge fixing map}

In the two computations of the functional integral presented above for $\calZ$, the main step is the change of field variables $\fieldsF \ni \phi \mapsto (u = \GFM(\phi), \psi = \DCGFM(\phi))\in \dress \times \invF$ defined in Section~\ref{sec gauge fixing as a change of field variables}. It is explicit in \eqref{eq Z in u and psi} but it is only implicit in \eqref{eq Z FP u psi} since the Dirac $\delta$-function selects precisely $\psi = \DCGFM(\phi)$.

This change of field variables depends on the ideal gauge fixing map $\GF$ (in fact, it is \emph{defined} by it through the DFM). Let us understand how this dependence  is carried forward onto the invariant field $\psi$. Let us consider a parametrized family of ideal gauge fixing maps $\GF_\epsilon$ such that $\GF_0 = \GF$. For any $\psi \in \invF$, let us define
\begin{align*}
\kv_{\mid \GF(\psi)} 
&\defeq \frac{d \GF_\epsilon(\psi)}{d \epsilon}_{\mid \epsilon=0} \in T_{\GF(\psi)} \fsV.
\end{align*}
A simple choice for such a family is for instance to consider $\kv \in \fsV$ and $\GF_\epsilon(\psi) = \GF(\psi) + \epsilon \kv$, for which $\kv_{\mid \GF(\psi)} = \kv$ is constant. 

Let us fix $\phi \in \fieldsF$ and define $u_\epsilon \defeq \GFM_\epsilon(\phi)$ and $\ku_{\mid u} \defeq \frac{d u_\epsilon}{d \epsilon}_{\mid \epsilon=0} \in T_u \dress$. Then, one has $\GF_\epsilon \circ \DC(\phi, u_\epsilon)  = 0$ for any $\epsilon$. Upon taking the derivative along $\epsilon$ at $\epsilon = 0$, one gets $\kv_{\mid \GF(\psi)} +  \fsdd_\dress (\GF \circ \DC)[\phi, u] ( \ku_{\mid u} ) = 0$ with $u = u_0$ and $\psi \defeq \DC(\phi, u)$. We assume, as before, that $\fsdd_\dress (\GF \circ \DC)[\phi, u]$ is invertible which yields
\begin{align}
\label{eq u from v by F and DC}
\ku_{\mid u} 
&= - \fsdd_\dress (\GF \circ \DC)[\phi, u]^{-1} \left( \kv_{\mid \GF(\psi)} \right).
\end{align}
Let us define the tangent vector $\xi \defeq T_u L_{u^{-1}} \ku_{\mid u} \in T_e \dress$, and write $u_\epsilon = L_u U_\xi(\epsilon) = u U_\xi(\epsilon)$ where $L_u$ is the left multiplication by $u$ in $\fsG$, $U_\xi(\epsilon)$ is a curve in $\dress$ with $U_\xi(0) = e$, and $\tfrac{d U_\xi(\epsilon)}{d \epsilon}_{\mid \epsilon=0} = \xi$.\footnote{One may think of $U_\xi(\epsilon)$ as the curve $e^{\epsilon \xi}$.} Notice that  under a gauge transformation, $U_\xi(\epsilon)$, and so $\xi$, are invariant since $u$ supports on the left the entire right action of $\ggG$ ($u^{\gamma} = \gamma^{-1}u$).

Let us consider now the family of dressed fields $\psi_\epsilon \defeq \DC(\phi, u_\epsilon)$ (with $\psi = \psi_0$) and let $\delta_\xi \psi \defeq \frac{d \psi_\epsilon}{d \epsilon}_{\mid \epsilon=0} \in T_\psi \invF$. With the previous parametrization, one gets
\begin{align} \label{eq delta xi psi xi}
\delta_\xi \psi 
= - \fsdd_\dress \DC[\phi, u] \circ \fsdd_\dress (\GF \circ \DC)[\phi, u]^{-1} \left( \kv_{\mid \GF(\psi)} \right)
= \fsdd_\dress \DC[\phi, u] \circ T_e L_u ( \xi ).
\end{align}
One can introduce a second parametrization along the gauge group as $\gamma_\epsilon \defeq u_\epsilon u^{-1} = u U_\xi(\epsilon) u^{-1} \in \ggG$.
Then $\Txi \defeq \frac{d \gamma_\epsilon}{d \epsilon}_{\mid \epsilon=0} = \Ad_u \xi \in \Lie \ggG = T_e \ggG$ and $\psi_\epsilon= \DC(\phi, \gamma_\epsilon u) = \DC(\GT(\phi, \gamma_\epsilon), u)$. Under a gauge transformation by $\gamma \in \ggG$, $\Txi$ transforms as $\Txi \mapsto \Ad_{\gamma^{-1}} \Txi$. Denote by $\delta_{\Txi} \phi \defeq \fsdd_\ggG \GT[\phi, e](\Txi)$ the infinitesimal gauge transformation of $\phi$ along $\Txi$. Then one gets another expression for $\delta_\xi \psi $:
\begin{align}
\label{eq delta xi psi Txi}
\delta_\xi \psi 
&= \fsdd_\fieldsF \DC[\phi, u] ( \delta_{\Txi} \phi ).
\end{align}
In \eqref{eq delta xi psi Txi}, since $\delta_{\Txi} \phi$ is an infinitesimal gauge transformation, $\delta_\xi \psi$ can be understood as an infinitesimal version of the dressing at $(\phi,u)$ applied to $\delta_{\Txi} \phi$. In order to fully understand \eqref{eq delta xi psi xi}, let us forget about field spaces and look only at their underlying functional spaces. Then one has to compute the derivative along $\epsilon$ of $\GA(\phi, u U_\xi(\epsilon)) = \GA( \GA(\phi, u), U_\xi(\epsilon)) = \GA( \psi, U_\xi(\epsilon))$, which amounts to 
\begin{align}
\label{eq delta xi psi from xi}
\delta_\xi \psi &= \fsdd_{\fsG} \GA[\psi, e](\xi).
\end{align}
As a \emph{functional relation}, this expression depends only on $\psi$ and $\xi$, and not on $\phi$ and $u$ (this was not obvious at first sight in \eqref{eq delta xi psi xi}). It is  the functional expression of an infinitesimal gauge transformation of $\psi$ along $\xi$. But notice that both $\psi$ and $\xi$ support \emph{trivial actions} of the gauge group (and so of its Lie algebra). Hence, this functional relation cannot be interpreted as a true gauge transformation acting on field spaces.

In other words, $\delta_\xi \psi$ in \eqref{eq delta xi psi from xi} has only an interpretation in terms of the (differential) geometry of functional spaces, but not in terms of the infinitesimal gauge group actions. Nevertheless, using the DFM (and more precisely an infinitesimal version of the dressing), it is still possible to interpret $\delta_\xi \psi$ in terms of field spaces using the true infinitesimal gauge transformation $\delta_{\Txi} \phi$ in \eqref{eq delta xi psi Txi} as remarked before. A similar reasoning in terms of functional spaces yields an equivalent relation to \eqref{eq u from v by F and DC} for $\xi$:
\begin{align}
\label{eq xi from v by F and GA}
\xi &= - \fsdd_{\fsG} (\GF \circ \GA)[\psi, e]^{-1} \left( \kv_{\mid \GF(\psi)} \right).
\end{align}
Once again, this expression depends only on the field variable $\psi$.

The variation $\psi \mapsto \psi + \delta_\xi \psi$ does not affect the action $S(\psi)$ since it is \emph{formally} gauge invariant.

\section{$R_\xi$ Gauge Fixing and Unitary Gauge}
\label{sec rxi gauge fixing and unitary gauge}

It is convenient to change our mathematical conventions on gauge fields into more physical ones, for instance conventions close to \cite{PeskSchr08a}, in order to compare the following developments to the ones in the literature.

Here we consider the situation $G = SU(n)$ or $G = U(1)$. Our mathematical notations rely on the following conventions. Let $\{E_a \}$ be a basis of antihermitean elements in $\kg = \ksu(n)$ ($=\{ \text{antihermitean } n\times n \text{ matrices with zero trace}  \}$)  or $\kg = \ku(1) = i \bbR$, such that $[E_a, E_b] = C_{ab}^c E_c$. A connection $1$-form $A \in  \connA$ (Yang-Mills gauge potential) can be decomposed as $A = A_\mu^a E_a \dd x^\mu = A_\mu \dd x^\mu$ with real fields $A_\mu^a$, so that $A_\mu^\dagger = - A_\mu$. An element $\gamma \in \ggG$ close to the identity can be written as $\gamma = e^{\epsilon} = 1 + \epsilon^a E_a + \calO(\epsilon^2)$ with $\epsilon = \epsilon^a E_a$, so that an infinitesimal gauge transformation takes the form $A_\mu^\epsilon = A_\mu + D_\mu \epsilon + \calO(\epsilon^2)$ where $D_\mu \epsilon = \partial_\mu \epsilon + [A_\mu, \epsilon]$.

To stick to standard physical notations we rely on the following conventions. Let $t_a \defeq i E_a$ be Hermitean elements in $\kg$, so that $[t_a, t_b] = i C_{ab}^c t_c$. Let $g$ be the coupling parameter for the interaction described by $G$, and let $\phyA = \phyA_\mu^a t_a \dd x^\mu = \phyA_\mu \dd x^\mu \defeq i g^{-1} A = g^{-1} A_\mu^a t_a \dd x^\mu$ be the physical gauge field, \textit{i.e.} $\phyA_\mu^a = g^{-1} A_\mu^a$ and $\phyA_\mu^\dagger = \phyA_\mu$. Its gauge field strength is $\phyF_{\mu\nu} = \partial_\mu \phyA_\nu - \partial_\nu \phyA_\mu - i g [\phyA_\mu, \phyA_\nu]$, that is $\phyF_{\mu\nu}^a = \partial_\mu \phyA_\nu^a - \partial_\nu \phyA_\mu^a + g C_{bc}^a \phyA_\mu^b \phyA_\nu^c$. Then a gauge transformation close to the identity can be written as $\gamma = e^{i \alpha^a t_a} = 1 + i \alpha^a t_a + \calO(\alpha^2)$ with $\alpha = \alpha^a t_a$. The gauge transformation of $\phyA$ is $\phyA_\mu^\gamma = \gamma^{-1} \phyA_\mu \gamma + i g^{-1} \gamma^{-1} \partial_\mu \gamma$, so that $\phyA_\mu^\alpha = \phyA_\mu - g^{-1} \phyD_\mu \alpha + \calO(\alpha^2)$ with $\phyD_\mu \alpha = \partial_\mu \alpha -i g [\phyA_\mu, \alpha]$. A gauge field $\phi \in \fieldsE$ is subject to the covariant derivative $\phyD_\mu \phi = \partial_\mu \phi - i g \phyA_\mu^a \eta(t_a) \phi$ where $\eta$ is the representation of $\kg$ on $E$ induced by the representation $\ell$ of $G$ on $E$.

\bigskip
The usual way to relate fields in the $R_\xi$ gauge and fields in the unitary gauge is to take the limit $\xi \to \infty$ at the level of Feynman rules and to identify the corresponding propagators with the ones obtained in the unitary gauge. 

In our framework, the $R_\xi$ gauge and the unitary gauge can be written in terms of dressing fields. Thanks to the DFM, the relation between fields in both gauges is achieved through the limit $\xi \to \infty$ in the spaces of type $\invF$ once all the fields of the original theory are dressed via the field-composer $\DC$.
 The Lagrangians in the two gauges are thus related when taking the limit.

Let us illustrate this point with two situations.

\smallskip
Let us first consider the simple situation of an Abelien Higgs model with $G = U(1)$ defined by the Lagrangian
\begin{align}
\label{eq abelian lagrangian}
L[\phyA, \phi] \defeq 
[(\partial_\mu - i e \phyA_\mu) \phi]^\dagger [(\partial^\mu - i e \phyA^\mu)\phi] - V(\phi) - \tfrac{1}{4} \phyF_{\mu\nu} \phyF^{\mu\nu}
\end{align}
where $\phi \in \fieldsE$ (with $E = \bbC$) is a $\bbC$-valued field (here $t_1 = 1$, $\eta=\Id$, and $g = e$), and $V(\phi) = \tfrac{\mu^2}{2} \phi^\dagger \phi + \tfrac{\lambda}{4} (\phi^\dagger \phi)^2$.

For any non zero real parameter $\xi$ and any $v > 0$, consider the $R_\xi$ gauge fixing map
\begin{align*}
\GF_{\xi, v, e}(\phyA, \phi) \defeq \partial^\mu \phyA_\mu - e v \xi \chi \in \underline{\Lie G}
\end{align*}
where $\phi$ is written as $\phi = \frac{v + h}{\sqrt{2}} e^{i\chi}$, which defines $h$ and $\chi$.
This is usually written as the extra term in the Lagrangian:
\begin{align*}
\GF^L_{\xi, v, e}(\phyA, \phi) \defeq -\tfrac{1}{2 \xi}( \partial^\mu \phyA_\mu - e v \xi \chi)^2
\end{align*}
The gauge fixing condition $\GF_{\xi, v, e}(\phyA^u, \phi^u) = 0$, to be solved for $u$ written as $u = e^{i \alpha}$, gives the equation $\partial^\mu \phyA_\mu - \partial^\mu \partial_\mu \alpha - e v \xi(\chi - \alpha) = 0$ to be solved for $\alpha$, that is:
\begin{align}
\label{eq gauge fixing abelien xi}
(\partial^\mu \partial_\mu - e v \xi) \alpha = \partial^\mu \phyA_\mu - e v \xi \chi
\end{align}
This equation determines a unique\footnote{Thanks to conditions at infinity in the Euclidean space.} solution $\alpha_{ev\xi}(\phyA,\phi)$, and so a unique dressing field $u_{ev\xi}(\phyA,\phi) \in \dress$. As for the Lorenz gauge condition, $\alpha_{ev\xi}(\phyA,\phi)$ is non local in the fields $\phyA$ and $\phi$ since one has to invert the Laplacian operator to write $\alpha_{ev\xi}$ in terms of $\phyA$ and $\phi$.

In \cite{FourFranLazzMass14a}, a unitary dressing field $u$ has been defined such that $\phi = \rho u$ (polar decomposition) where $\rho \defeq \abs{\phi}$. This dressing field was used to dress $\phi$ and $\phyA$ into gauge invariant fields and the Lagrangian written in terms of these dressed fields is the so-called “Lagrangian in the unitary gauge”.

Taking the limit $\xi \to \infty$ in \eqref{eq gauge fixing abelien xi}, makes senses if $v\neq 0$. Then one gets the simpler equation $\alpha_{\infty}(\phyA,\phi) = \chi$, that is $u_{\infty}(\phyA,\phi) = u_{\infty}(\phi) = e^{i\chi}$ for $\phi = \rho e^{i\chi}$, so that, \emph{in the space of dressing fields, $\lim_{\xi \to \infty} u_{ev\xi}(\phyA,\phi) = u_{\infty}(\phi) = u$ is the unitary dressing field}. Notice that this limit simplifies the equation in such a way that $u_{\infty}(\phi)$ is now \emph{local} in terms of $\phi$ (and does not depend anymore on $\phyA$). Moreover, 
$u_{\infty}(\phi)$ does not depend on the choice of $v\neq 0$, as expected.

\smallskip
This procedure extends to a more general situation of non Abelian fields. Let $\phi = (\phi_1, \dots, \phi_{2N})$ be real fields subjected to a real representation $\ell$ of $G = SU(N)$. Denote by $T_a \defeq \eta(t_a)$ the real  antisymmetric generators of this representation so that the covariant derivative is $D_\mu \phi = \partial_\mu \phi + g \phyA_\mu^a T_a \phi$ \cite[Chap.~20]{PeskSchr08a}. Let $\hphyA_\mu \defeq \phyA_\mu^a T_a$ (notice that $\hphyA_\mu^\intercal = - \hphyA_\mu$ where ${}^\intercal$ is the transpose matrix) for which 
$ \hphyA_\mu^\gamma = \gamma^{-1}  \hphyA_\mu \gamma + g^{-1} \gamma^{-1} \partial_\mu \gamma$. Consider the Lagrangian $L[\phyA, \phi] \defeq \tfrac{1}{2} (D_\mu \phi)^\intercal (D^\mu \phi) - V(\phi) - \tfrac{1}{4} \phyF_{\mu\nu} \phyF^{\mu\nu}$. Let $\phi_0$ denote a fixed constant configuration of the $\phi$ field that minimizes $V(\phi)$ and let us use the new field $\varphi$ defined by $\phi \rdefeq \phi_0 + \varphi$. For any $\gamma \in \ggG$, we define the gauge transformed $\varphi^\gamma$ of $\varphi$ as $\varphi^\gamma \defeq \ell_{\gamma^{-1}} (\phi_0 + \varphi) - \phi_0$.

In the expansion of $\tfrac{1}{2} (D_\mu \phi)^\intercal (D^\mu \phi)$, we are interested in terms in $\hphyA$ times $\varphi$. These are $\tfrac{1}{2} g (\partial_\mu \varphi)^\intercal \hphyA^\mu \phi_0 - \tfrac{1}{2} g \phi_0^\intercal \hphyA_\mu (\partial^\mu \varphi) = g (\partial_\mu \varphi)^\intercal \hphyA^\mu \phi_0$. Using integration by parts, this term is $- g \varphi^\intercal (\partial_\mu \hphyA^\mu) \phi_0$ under the integration over space-time. The $R_\xi$ gauge fixing condition is chosen in order to cancel this term. As an extra term in the Lagrangian, it is 
\begin{align*}
\GF^L(\hphyA, \varphi) = -\tfrac{1}{2 \xi} \sum_{a} ( \partial^\mu \phyA^a_\mu - g \xi \varphi^\intercal T^a \phi_0 )^2
\end{align*}
with $T^a \defeq K^{ab} T_b$ for the Killing metric $K$ of $SU(N)$ where $K_{ab}\propto \tr(T_aT_b)$. This extra term is associated to the gauge fixing map defined by 
\begin{align*}
\GF_{\xi, \phi_0, g}(\hphyA, \varphi) \defeq \GF^a_{\xi, \phi_0, g}(\hphyA, \varphi) T_a \in \underline{\Lie G}
\end{align*}
where
\begin{align*}
\GF^a_{\xi, \phi_0, g}(\hphyA, \varphi) \defeq \partial^\mu \phyA^a_\mu - g \xi \varphi^\intercal T^a \phi_0.
\end{align*}
The term $\GF^L(\hphyA, \varphi)$ is nothing but $K(\GF_{\xi, \phi_0, g}(\hphyA, \varphi), \GF_{\xi, \phi_0, g}(\hphyA, \varphi))$ up to a factor which depends on normalizations when the $T_a$'s form an orthogonal basis for $K$.

For any $u \in \dress$, define $\hu \defeq \ell_u \in \underline{GL_{2N}(\bbR)}$. Then, one has to solve for $u \in \dress$ the non linear second order differential equation
\begin{multline}
\label{eq gauge fixing non abelien xi}
g^{-1} \hu^{-1} \partial_\mu \partial^\mu \hu
+ g (\partial_\mu \hu^{-1}) (\partial^\mu \hu)
+ (\partial_\mu \hu^{-1}) \hphyA^\mu \hu
+ \hu^{-1} \hphyA^\mu (\partial_\mu \hu)
+ \hu^{-1} (\partial_\mu \hphyA^\mu) \hu
\\
+ g \xi (\phi_0^\intercal T^a \hu^{-1} \varphi) T_a
+ g \xi (\phi_0^\intercal T^a \hu^{-1} \phi_0) T_a
- g \xi (\phi_0^\intercal T^a \phi_0) T_a
= 0
\end{multline}
Notice that the last term in the LHS is zero since $T^a$ is antisymmetric. In case the gauge fixing map $\GF_{\xi, \phi_0, g}$ is ideal, this equation defines a unique dressing field $u_{\xi, g, \phi_0}(\phyA, \phi) \in \dress$, which is clearly a non local expression in terms of the fields $\phyA$ and $\phi$.

The limit $\xi \to \infty$ ($g$ and $\phi_0$ fixed) of eq.~\eqref{eq gauge fixing non abelien xi} reduces to the simple family of algebraic equations
\begin{align}
\label{eq gauge fixing non abelien xi infinity}
\phi_0^\intercal T^a \hu^{-1} \phi  = 0 \quad \text{for any $a$}.
\end{align}
This system of equations is the one defining the unitary gauge for a very general model of broken local symmetries, see for instance \cite[eq. (3.2)]{Wein73a}. This equation defines a unique dressing field $u_{\infty, \phi_0}(\phi) \in \dress$ which is local in terms of $\phi$ and does not depend on $\phyA$.

So, as for the case of the Abelien Higgs model, the limit $\xi \to \infty$ can be performed in the space of dressing fields $\dress$ as $\lim_{\xi \to \infty}  u_{\xi, g, \phi_0}(\phyA, \phi) = u_{\infty, \phi_0}(\phi)$ and it goes from a non local expression in terms of the fields $\phyA$ and $\phi$ to a local expression in terms of $\phi$ alone.
Notice that \eqref{eq gauge fixing non abelien xi infinity} implies that $u_{\infty, \phi_0}(\phi)$ only depends on the direction of $\phi_0\neq 0$.

\medskip
The above-mentioned limit procedures are not rigorously established from a mathematical point of view. In the Abelian case, one can consider the Fourier transform of the original equation \eqref{eq gauge fixing abelien xi} to get an algebraic equation for which the limit procedure is clear. But for non Abelian fields, it requires more mathematical developments to consider the limit from eq.~\eqref{eq gauge fixing non abelien xi} to eq.~\eqref{eq gauge fixing non abelien xi infinity}. Our heuristic approach should be supported by topological considerations on field spaces (introducing Sobolev norms for instance), which is out of the scope of this paper.

\section{Conclusion}
\label{sec conclusion}

In this paper, we have revisited the DFM within a new mathematical framework tailored to QFT. This framework distinguishes between functional spaces and field spaces, the latter being functional spaces with specific actions of the gauge group according to the model at hand. We have shown that the gauge fixing procedure performed in the functional path integral of QFT is an example of the dressing method. Additionally, we illustrated how the Fadeev-Popov gauge fixing procedure can be reformulated using this new formalism. 
Notably, with  $R_\xi$ gauge fixing conditions and “unitary gauges” now understood in terms of dressing fields, we showed that taking the limit $\xi \to \infty$ can be realized within the space of dressing fields. As an outcome the locality of the dressing field is restored. This provides new insights on the relationship between these two types of gauge.

\section*{Acknowledgments}

We would like to thank Louis Usala for suggestions after reading the manuscript. 
\appendix

\section{Functional Differentials and some Jacobians}
\label{sec functional differential and jacobians}

We present in this Appendix some notations concerning functional differentials adapted to our framework. These definitions have been used in the main text. Here, we use them to compute some Jacobians associated to changes of field variables in the functional integrals that are induced by the DFM in the unitary gauge. Some of these computations have been presented before in \cite{MassWall10a}, but in a less complete manner.

Let $F_i$ be spaces and let $\fsF[i]$ be their associated functional spaces on an open set $U$ of $M$. We will look at $\fsF[i]$ as “infinite dimensional smooth  manifolds” on which it is possible to consider some structures usually defined on ordinary manifolds. Obviously, this would require a lot of work to define precisely smoothness on these spaces and smoothness of maps between these spaces (as considered in the following). It is out of the scope of this paper to do that, since we will only be interested in the algebraic part of the obtained structures, not in their analytic existence.\footnote{In fact, we will consider the geometry of these spaces using an approach quite similar to the one developed and described in \cite{FrolKrie88a, KrieMich97a}.} The only basic structure we formally introduce is the tangent space $T_{f_i} \fsF[i]$ of $\fsF[i]$ at $f_i \in \fsF[i]$, which consists of all the $\dot{\gamma}(0) = \tfrac{d \gamma}{dt}_{\mid t=0}$ for $\gamma : (-\epsilon, \epsilon) \to \fsF[i]$ any smooth curve in $\fsF[i]$ such that $\gamma(0) = f_i$ (here $\epsilon > 0$). This reproduces the usual definition of the tangent space in ordinary differential geometry.

Let $C : \fsF[1] \to \fsF[2]$ be a map between two functional spaces. For any $f_1 \in \fsF[1]$, the linear tangent map of $C$ at $f_i$ is the linear map $\fsdd C[f_1] : T_{f_1} \fsF[1] \to T_{C(f_1)} \fsF[2]$ defined by $\fsdd C[f_1](\dot{\gamma}(0)) \defeq \frac{d C \circ \gamma(t)}{dt}_{\mid t = 0}$ for any smooth curve $\gamma$ as before. Thus, $\fsdd C$ will be called the functional differential of $C$. This again reproduces the usual definition. This definition is also a general version of the “functional derivative” introduced in Field Theory, where the $F_i$ are vector spaces and $\gamma(t) = f_i + t X_i$ for a $X_i \in \fsF[1]$. 

For $C_1 : \fsF[1] \to \fsF[2]$ and $C_2 : \fsF[2] \to \fsF[3]$, one has the composition law (or chain rule) $\fsdd (C_2 \circ C_1)[f_1](X_1) = \fsdd C_2[C_1(f_1)]( \fsdd C_1[f_1](X_1) )$ for any $X_1 \in T_{f_1} \fsF[1]$.

For $C : \fsF[1] \times \fsF[2] \to \fsF[3]$, we denote by $\fsdd_{\fsF[i]} C$, for $i=1,2$,  the functional differentials along the two functional spaces $\fsF[i]$, where $\fsdd_{\fsF[i]} C[f_1,f_2] : T_{f_i} \fsF[i] \to T_{C(f_1, f_2)} \fsF[3]$. The total functional differential is then $\fsdd C[f_1,f_2](X_1, X_2) = \fsdd_{\fsF[1]} C[f_1,f_2](X_1) + \fsdd_{\fsF[2]} C[f_1,f_2](X_2)$ for any $X_i \in T_{f_i} \fsF[i]$. We can write this identity as $\fsdd C = \fsdd_{\fsF[1]} C + \fsdd_{\fsF[2]} C$.

\medskip
Let us consider a change of field variables given by $C : \fsF[1] \to \fsF[2]$ (where $\fsF[1]$, resp. $\fsF[2]$, collects all the initial fields, resp.{} the final fields). The  corresponding Jacobian to be computed in the functional integration is the functional determinant of the linear map $\fsdd C[f_1]$. Such a computation was already proposed in \cite{MassWall10a} for the DFM applied to the Electro-Weak sector of the Standard Model, but there, it was not completely described.

Let us first consider the Abelian case described by the Lagrangian \eqref{L-U(1)}. The original field variables in the functional integral are $(\phyA, \phi) \in \fsbbR^m \times \fsbbC$ (remember that $\phyA_\mu \in i \fsku{1} = \fsbbR$). The dressing field $u \in \fsU{1}$ for the unitary gauge is defined by writing the polar decomposition $\phi = \rho u$ with $\rho \in \fsbbRps$. Let $\phya = \DC(\phyA, u) = \phyA + \tfrac{i}{e} u^{-1} \dd u \in \fsbbR^m$. Then the new variables are $(\phya, \rho, u) \in \fsbbR^m \times \fsbbRps \times \fsU{1}$. For the forthcoming computations, it is convenient to change the variable $\rho\in \fsbbRps$ into the variable $\sigma \in \fsbbR$ by the relation $\rho = e^\sigma$. We then define the mapping $C :  \fsF[1] \defeq \fsbbR^m \times \fsbbR \times \fsU{1} \to \fsF[2] \defeq \fsbbR^m \times \fsbbC$ as $C(\phya, \sigma, u) = (\phyA, \phi) = (\phya + \tfrac{i}{e} u \dd u^{-1}, e^\sigma u)$. Let $\Tphya \in \fsbbR^m$, $\Tsigma \in \fsbbR$ and $\Talpha \in \fsbbR$ and define $\gamma(t) \defeq (\phya + t \Tphya, \sigma + t \Tsigma, u e^{i t \Talpha})$ a curve in $\fsF[1]$ such that $\gamma(0) = (\phya, \sigma, u)$ and $\dot{\gamma}(0) = (\Tphya, \Tsigma, i \Talpha) \in T_{(\phya, \sigma, u)} \fsF[1] \simeq \fsbbR^m \times \fsbbR \times \fsku{1}$. A straightforward computation then gives $\fsdd C[\phya, \sigma, u](\Tphya, \Tsigma, i\Talpha) = (\Tphya + \tfrac{1}{e} \dd \Talpha, (\Tsigma + i \Talpha) e^{\sigma} u)$. The Jacobian for this change of variables is then the functional determinant of the functional operator  written in matrix form acting on the components $(\Tphya, \Tsigma, \Talpha)$:
\begin{align*}
\begin{pmatrix}
\bbbone_m & \bbbzero_{1 \times m} & \tfrac{1}{e} \dd \\
\bbbzero_{m \times 2} & e^{\sigma} u & i e^{\sigma} u
\end{pmatrix}
= \begin{pmatrix}
\bbbone_m & \bbbzero_{1 \times m} & \tfrac{1}{e} \dd \\
\bbbzero_{m \times 1} & \phi_1 & - \phi_2 \\
\bbbzero_{m \times 1} & \phi_2 & \phi_1
\end{pmatrix}
\end{align*}
for $\phi = e^{\sigma} u = \phi_1 + i \phi_2$. This Jacobian has to be composed with the one for the change of variables $\rho \mapsto \sigma = \ln \rho$, which is the determinant of the operator $T_\rho \fsbbRps \ni \Trho \mapsto \Tsigma \defeq \rho^{-1} \Trho \in T_\sigma \fsbbR$. The complete operator to consider for the Jacobian associated to the change of field variables $(\phya, \rho, u) \mapsto (\phyA, \phi)$ is then written in matrix form on the components $(\Tphya, \Trho, \Talpha)$ as
\begin{align*}
\bM = \begin{pmatrix}
\bbbone_m & \bbbzero_{1 \times m} & \tfrac{1}{e} \dd \\
\bbbzero_{m \times 2} & u & i e^{\sigma} u
\end{pmatrix}
= \begin{pmatrix}
\bbbone_m & \bbbzero_{1 \times m} & \tfrac{1}{e} \dd \\
\bbbzero_{m \times 1} & \rho^{-1}\phi_1 & - \phi_2 \\
\bbbzero_{m \times 1} & \rho^{-1}\phi_2 & \phi_1
\end{pmatrix}
\end{align*}
This is a matrix block operator of the form $\bM = \smallpmatrix{\bbbone_m & \bD \\ \bbbzero_{m \times 2} & \bE }$. Its determinant can be evaluated using $\Det \bM = e^{\Tr \ln \bM}$ where the definition of $\ln \bM$ relies on the usual series for $\ln(1 + x)$. Since $(\bM - 1)^n = \smallpmatrix{\bbbzero_m & \bD (\bE - \bbbone_2)^{n-1} \\ \bbbzero_{m \times 2} & (\bE - \bbbone_2)^{n} }$, on the diagonal of $\ln \bM$ one gets $\bbbzero_m$ and $\ln \bE$. Applying the trace and the exponential, one then gets $\Det \bM = e^{\Tr \ln \bE} = \Det \bE$. The operator $\bE$ has the form $\bE_{ab}(x,y) = E_{ab}(x) \delta^{(m)}(x-y)$ so that $\Det \bE = \exp \left[ \delta^{(m)}(0) \int \dd^m x \ln (\det E(x)) \right]$ (see for instance \cite{SalaStra70a}) with $\det E = \abs{\phi} = \rho$ so that
\[
\Det \bM = \exp \left[ i \int \dd^m x\   \delta^{(m)}(0) \ln \rho(x) \right]. 
\]

\smallskip
For the $SU(2)$ group, a similar computation can be performed. The original field variables are $(\phyA, \phi) \in \fsbbR^{3m} \times \fsbbC^2$, the dressing field $u$ is uniquely defined by the decomposition $\phi = \eta u \smallpmatrix{0 \\ 1}$ with $\eta = \norm{\phi} \in \fsbbRps$, and the dressed gauge potential is $\phya = \DC(\phyA, u) = u^{-1} \phyA u + \tfrac{i}{g} u^{-1} \dd u \in \fsbbR^{3m}$. The new variables are then $(\phya, \eta, u) \in \fsbbR^{3m} \times \fsbbRps \times \fsSU{2}$. As before we use the variable $\sigma \defeq \ln \eta \in \fsbbR$, so that $C :  \fsF[1] \defeq \fsbbR^{3m} \times \fsbbR \times \fsSU{2} \to \fsF[2] \defeq \fsbbR^{2m} \times \fsbbC^2$ is given by $C(\phya, \sigma, u) = (\phyA, \phi) = (u \phya u^{-1} + \tfrac{i}{g} u \dd u^{-1}, e^\sigma u \smallpmatrix{0\\ 1})$. The functional differential of $C$ is computed using the curve $\gamma(t) \defeq (\phya + t \Tphya, \sigma + t \Tsigma, e^{i t \Talpha} u)$ in $\fsF[1]$ with, for any $(\Talpha^a)\in \fsbbR^3$,  $\Talpha\defeq \Talpha^a  \tau_a$ where the $\tau_a$'s are the Pauli matrices.
$ \fsbbR^3$ is identified with $\underline{\ksu(2)}$ through $(\Talpha^a) \mapsto i \Talpha$.

 One then gets $\fsdd C[\phya, \sigma, u](\Tphya, \Tsigma, i\Talpha) = (u \Tphya u^{-1} + \tfrac{1}{g} \phyD \Talpha, (\Tsigma + i \Talpha) \phi)$ where as before $\phyD \Talpha = \dd \Talpha - i g [\phyA, \Talpha]$.  Using the explicit expressions for the Pauli matrices, this is the operator written in matrix form in components $(\Tphya, \Tsigma, \Talpha)$ as
\begin{align*}
\begin{pmatrix}
\Ad_u^{(m)} & \bbbzero_{1 \times 3m} & \multicolumn{3}{c}{\tfrac{1}{g} \phyD} \\
\bbbzero_{3m \times 1} & \phi_1 & -\phi_4 & \phi_3 & -\phi_2 \\
\bbbzero_{3m \times 1} & \phi_2 & \phi_3 & \phi_4 & \phi_1 \\
\bbbzero_{3m \times 1} & \phi_3 & -\phi_2 & -\phi_1 & \phi_4 \\
\bbbzero_{3m \times 1} & \phi_4 & \phi_1 & -\phi_2 & -\phi_3
\end{pmatrix}
\end{align*}
with $\phi = \smallpmatrix{\phi_1 + i \phi_2 \\ \phi_3 + i \phi_4}$ and where $\Ad_u^{(m)}$ acts as $\Ad_u$ on the $m$ $\ksu(2)$-valued fields $\Tphya_\mu$. As before, one has to compose with the operator associated to the change of field variables $\eta \mapsto \sigma = \ln \eta$. One thus gets the complete operator to consider for the Jacobian associated to the change of field variables $(\phya, \eta, u) \mapsto (\phyA, \phi)$:
\begin{align*}
\bM = \begin{pmatrix}
\Ad_u^{(m)} & \bbbzero_{1 \times 3m} & \multicolumn{3}{c}{\tfrac{1}{g} \phyD} \\
\bbbzero_{3m \times 1} & \eta^{-1} \phi_1 & -\phi_4 & \phi_3 & -\phi_2 \\
\bbbzero_{3m \times 1} & \eta^{-1} \phi_2 & \phi_3 & \phi_4 & \phi_1 \\
\bbbzero_{3m \times 1} & \eta^{-1} \phi_3 & -\phi_2 & -\phi_1 & \phi_4 \\
\bbbzero_{3m \times 1} & \eta^{-1} \phi_4 & \phi_1 & -\phi_2 & -\phi_3
\end{pmatrix}
\end{align*}
Following the same idea as before, the matrix block structure $\bM = \smallpmatrix{\bE_1 & \bD \\ \bbbzero_{3m \times 4} & \bE_2}$ of this operator gives $\Det \bM = (\Det \bE_1)(\Det \bE_2)$ with $\bE_{i, ab}(x,y) = E_{i, ab}(x) \delta^{(m)}(x-y)$. Finally, one has
\begin{align*}
\Det \bM = \exp\left[ i \int \dd^m x \  3 \delta^{(m)}(0) \ln \eta(x) \right].
\end{align*}
This relation can be compared to \cite[eqs.~(3.8) and (3.9)]{GrosKoge93a}, but with the main difference that in the DFM, the choice of a minimum in the potential $V(\phi)$ can be delayed \emph{after} the change of field variables, so that the VEV $v$ does not enter into the game here. We refer to \cite{AttaFranLazzMass18a, FourFranLazzMass14a} for comments on this aspect of the present approach. 

\bibliography{bibliography}

\end{document}